\newcommand{\tr}{{\rm Tr}}
\newcommand{\gf}{{\rm GF}}
\newcommand{\Z}{\mathbb {Z}}
\newcommand{\C}{{\mathcal C}}
\newtheorem{theorem}{Theorem}[section]
\newtheorem{lemma}[theorem]{Lemma}
\newtheorem{example}[theorem]{Example}
\newtheorem{conjecture}[theorem]{Open Problem}
\begin{document}

\title{Optimal Ternary Cyclic Codes from Monomials\thanks{C. Ding's research was supported by 
The Hong Kong Research Grants Council, Proj. No. 600812. T. Helleseth's research was supported by 
the
Norwegian
Research
Council. }}
\author{Cunsheng Ding\thanks{C. Ding is with the Department of Computer Science 
                                                  and Engineering, The Hong Kong University of Science and Technology, 
                                                  Clear Water Bay, Kowloon, Hong Kong, China. Email: cding@ust.hk} and 
             Tor Helleseth\thanks{T. Helleseth is with the Department of Informatics, The University of Bergen, 
                                                  N-5020 Bergen, Norway. Email: Tor.Helleseth@ii.uib.no}}

\date{\today}
\maketitle

\begin{abstract} 
Cyclic codes are a subclass of linear codes and have applications in consumer electronics, 
data storage systems, and communication systems as they have efficient encoding and 
decoding algorithms. Perfect nonlinear monomials were employed to construct optimal 
ternary cyclic codes with parameters  $[3^m-1, 3^m-1-2m, 4]$ by Carlet, Ding and Yuan 
in 2005. In this paper, almost perfect nonlinear monomials, and a number of 
other monomials over $\gf(3^m)$ are used to construct optimal ternary cyclic codes 
with the same parameters.  Nine open problems on such codes are also presented.  
\end{abstract}

\begin{keywords} 
Almost perfect nonlinear functions, cyclic codes, monomials, perfect nonlinear functions, planar functions.  
\end{keywords}


\section{Introduction}

Let $q$ be a power of a prime $p$. 
A linear $[n, k, d]$ code over $\gf(p)$ is a $k$-dimensional subspace of $\gf(p)^n$ 
with minimum (Hamming) nonzero weight $d$. 
A linear $[n,k]$ code $\C$ over the finite field $\gf(p)$ is called {\em cyclic} if 
$(c_0,c_1, \cdots, c_{n-1}) \in \C$ implies $(c_{n-1}, c_0, c_1, \cdots, c_{n-2}) 
\in \C$.  
Let $\gcd(n, p)=1$. By identifying any vector $(c_0,c_1, \cdots, c_{n-1}) \in \gf(p)^n$ 
with  
$$ 
c_0+c_1x+c_2x^2+ \cdots + c_{n-1}x^{n-1} \in \gf(p)[x]/(x^n-1), 
$$
any code $\C$ of length $n$ over $\gf(p)$ corresponds to a subset of $\gf(p)[x]/(x^n-1)$. 
The linear code $\C$ is cyclic if and only if the corresponding subset in $\gf(p)[x]/(x^n-1)$ 
is an ideal of the ring $\gf(p)[x]/(x^n-1)$. 
It is well known that every ideal of $\gf(p)[x]/(x^n-1)$ is principal. Let $\C=(g(x))$ be a 
cyclic code, where $g(x)$ is monic and has the least degree. Then $g(x)$ is called the 
{\em generator polynomial} and 
$h(x)=(x^n-1)/g(x)$ is referred to as the {\em parity-check} polynomial of 
$\C$.

The error correcting capability of cyclic codes may not be as good as some other linear 
codes in general. However, cyclic codes have wide applications in storage and communication 
systems because they have efficient encoding and decoding algorithms 
\cite{Chie,Forn,Pran}. 
For example, Reed–-Solomon codes have found important applications from deep-space 
communication to consumer electronics. They are prominently used in consumer 
electronics such as CDs, DVDs, Blu-ray Discs, in data transmission technologies 
such as DSL \& WiMAX, in broadcast systems such as DVB and ATSC, and in computer 
applications such as RAID 6 systems.

Perfect nonlinear monomials were employed to construct optimal 
ternary cyclic codes with parameters  $[3^m-1, 3^m-1-2m, 4]$ by Carlet, Ding and Yuan 
in \cite{CDY05}. 
In this paper, almost perfect nonlinear (APN) monomials and a number of classes of monomials 
over $\gf(3^m)$ will be used to construct many classes of optimal ternary cyclic codes with the 
same parameters.  Nine open problems on such codes are also presented.   

\section{Preliminaries} 

In this section, we fix some basic notation for this paper and introduce almost perfect 
nonlinear and planar functions, and $q$-cyclotomic cosets that will be employed in 
subsequent sections.     

\subsection{Some notation fixed throughout this paper}\label{sec-notations} 

Throughout this paper, we adopt the following notation unless otherwise stated: 
\begin{itemize} 
\item $p$ is an odd prime, and $p=3$ for Section \ref{sec-tern} and subsequent sections.  
\item $q=p^m$, where $m$ is a positive integer.  
\item $n=p^m-1$.  
\item $\Z_n=\{0,1,2,\cdots, n-1\}$ associated with the integer addition modulo $n$ and  
           integer multiplication modulo $n$ operations. 
\item $\alpha$ is a generator of $\gf(q)^*:=\gf(q) \setminus \{0\}$. 
\item $m_a(x)$ is the minimal polynomial of $a \in \gf(q)$ over $\gf(p)$.    
\item $\tr(x)$ is the trace function from $\gf(q)$ to $\gf(p)$.        
\item By the Database we mean the collection of the tables of best linear codes known maintained by 
         Markus Grassl at http://www.codetables.de/.     
\end{itemize}

\subsection{The $p$-cyclotomic cosets modulo $n=p^m-1$}\label{sec-cpsets}

The $p$-cyclotomic coset modulo $n$ containing $j$ is defined by 
$$ 
C_j=\{j, pj, p^2j, \cdots, p^{\ell_j-1}j\} \subset \Z_n 
$$
where $\ell_j$ is the smallest positive integer such that $p^{\ell_j}j \equiv j \pmod{n}$, 
and is called the size of $C_j$. It is known that $\ell_j$ divides $m$. The smallest integer 
in $C_j$ is called the {\em coset leader} of $C_j$. Let $\Gamma$ denote the set of all 
coset leaders. By definition, we have 
$$ 
\bigcup_{j \in \Gamma} C_j =\Z_n.  
$$ 

The following lemma is useful in the sequel. 

\begin{lemma}\label{lem-zero}
Let $q=p^m$ and $n=q-1$. For any $1 \le e \le n-1$ with $\gcd(e, q-1)=2$, the length $\ell_e$ of the $p$-cyclotomic 
coset $C_e$ is equal to $m$. 
\end{lemma}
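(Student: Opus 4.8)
The plan is to argue by contradiction. Suppose $\ell_e < m$; since $\ell_e \mid m$, we may write $\ell_e = m/t$ for some integer $t > 1$, so that $p^{\ell_e} e \equiv e \pmod{n}$ with $n = p^m - 1$. The congruence $p^{\ell_e} e \equiv e \pmod{p^m-1}$ means $(p^{\ell_e} - 1) e \equiv 0 \pmod{p^m - 1}$, i.e. $(p^m - 1) \mid (p^{\ell_e} - 1) e$. Writing $d = \gcd(p^{\ell_e}-1, p^m-1) = p^{\gcd(\ell_e, m)} - 1 = p^{\ell_e} - 1$ (since $\ell_e \mid m$), the standard divisibility argument shows that $(p^m-1)/d = (p^m-1)/(p^{\ell_e}-1)$ must divide $e$. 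In other words, $e$ is a multiple of $N := 1 + p^{\ell_e} + p^{2\ell_e} + \cdots + p^{(t-1)\ell_e}$.

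The key step is then to examine $\gcd(N, q-1)$ and derive a contradiction with the hypothesis $\gcd(e, q-1) = 2$. Since $e$ is a multiple of $N$, we have $\gcd(e, q-1)$ is a multiple of $\gcd(N, q-1)$, so it suffices to show $\gcd(N, q-1) \neq 2$ — more precisely, that $\gcd(N, q-1)$ cannot be exactly $2$, which will force $\gcd(e,q-1) \neq 2$. I would compute $N \bmod (q-1)$: modulo $q - 1 = p^m - 1$ we have $p^m \equiv 1$, and $N$ has $t$ terms each a power of $p^{\ell_e}$; reducing, $N \equiv t \cdot 1 = t$ is not quite right because the individual terms $p^{j\ell_e}$ need not be $1$ mod $p^m-1$, so instead I track the factor structure directly. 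The cleanest route: $N = (p^m - 1)/(p^{\ell_e} - 1)$, and one shows $\gcd\!\left(\frac{p^m-1}{p^{\ell_e}-1},\, p^m - 1\right) = \frac{p^m-1}{p^{\ell_e}-1} \cdot \gcd\!\left(1, \ldots\right)$; simpler still, observe $2 \mid p^{\ell_e} - 1$ already (as $p$ is odd), so $e$ being a multiple of $N = (p^m-1)/(p^{\ell_e}-1)$ gives $\gcd(e, p^m-1) \geq N$, and since $t \geq 2$ and $\ell_e \leq m/2$ one checks $N = 1 + p^{\ell_e} + \cdots > 2$ whenever $m \geq 2$ (the case $m = 1$ being vacuous as then $n - 1 = p - 2$ and the hypothesis range may be empty or handled directly). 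Thus $\gcd(e, q-1) \geq N > 2$, contradicting $\gcd(e, q-1) = 2$.

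I expect the main obstacle to be the bookkeeping in the divisibility step — correctly deducing from $(p^m-1) \mid (p^{\ell_e}-1)e$ that $N = (p^m-1)/(p^{\ell_e}-1)$ divides $e$. This uses that $\gcd\bigl((p^{\ell_e}-1)/\!\gcd, \ N\bigr)$ behaves well; concretely, from $p^{\ell_e} - 1 \mid p^m - 1$ we factor $p^m - 1 = (p^{\ell_e}-1) N$, so $(p^{\ell_e}-1) N \mid (p^{\ell_e}-1) e$ immediately yields $N \mid e$ after cancelling the common factor $p^{\ell_e}-1$. That makes the argument clean. The remaining routine check is the numerical inequality $1 + p^{\ell_e} + \cdots + p^{(t-1)\ell_e} > 2$ for $t \geq 2$, which holds trivially since $p \geq 3$ forces the sum to be at least $1 + p \geq 4$. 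Hence no proper divisor $\ell_e$ of $m$ can satisfy the cyclotomic-coset relation while keeping $\gcd(e, q-1) = 2$, so $\ell_e = m$.
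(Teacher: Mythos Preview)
Your argument is correct and is essentially the same as the paper's: both start from the divisibility $(p^m-1)\mid e(p^{\ell_e}-1)$ and exploit $\gcd(e,p^m-1)=2$; the paper deduces $(p^m-1)/2\mid p^{\ell_e}-1$ and concludes $\ell_e=m$ directly, while you equivalently show that $N=(p^m-1)/(p^{\ell_e}-1)$ divides both $e$ and $p^m-1$ and is at least $1+p\ge 4$, forcing the same conclusion by contradiction.
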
 

\begin{proof} 
By definition, $\ell_e$ is the smallest positive integer such that $e(p^{\ell_e}-1) \equiv 0 \pmod{p^m-1}$. Hence 
$\ell_e$ is the smallest positive integer such that $(p^m-1)|e(p^{\ell_e}-1)$. Since $\gcd(e, q-1)=2$, $\ell_e$ is 
the smallest positive integer such that $\frac{p^m-1}{2}|(p^{\ell_e}-1)$. It then follows that $\ell_e=m$. This 
completes the proof. 
\end{proof}

\subsection{Perfect and almost perfect nonlinear functions on $\gf(q)$}\label{sec-APNPN} 

A function $f: \gf(q) \to \gf(q)$ is called {\em almost perfect 
nonlinear (APN)} if 
$$ 
\max_{a \in \gf(q)^*} \max_{b \in \gf(q)} |\{x \in \gf(q): f(x+a)-f(x)=b\}| =2,  
$$ 
and is referred to as 
{\em perfect 
nonlinear or planar} if 
$$ 
\max_{a \in \gf(q)^*} \max_{b \in \gf(q)} |\{x \in \gf(q): f(x+a)-f(x)=b\}| =1.   
$$ 

There is no perfect nonlinear (planar) function on $\gf(2^m)$. Known APN polynomials  over 
$\gf(2^m)$ can be found in \cite{BD,BC,BCL2,BCL1,BCP,Nybe,Gold,Kasa,Dobb99,Dobb992,HX}.  
However, there are both planar and APN functions over $\gf(p^m)$ for odd primes $p$.  
In the sequel, planar and APN monomials over $\gf(3^m)$ will be employed to construct optimal 
ternary cyclic codes. The results of this paper will be a nice demonstration of applications of APN 
functions in engineering.  

\section{The codes defined by pairs of monomials}\label{sec-codesmono} 

Let $p$ be a prime and let $q=p^m$, where $m$ is a positive integer. Let $m_{\alpha^i}(x)$ denote the 
minimal polynomial over $\gf(p)$ of $\alpha^i$, where $\alpha$ is a generator of $\gf(q)^*$. 
In this paper, we consider the cyclic code of length $n=q-1$ over $\gf(p)$ with generator 
polynomial $m_{\alpha}(x) m_{\alpha^e}(x)$, denoted by $\C_{(1,e)}$, where $1 < e <q-1$ and 
$e \not\in C_1$.  The condition that $e \not\in C_1$ is to make sure that  $m_{\alpha}(x)$ and  
$m_{\alpha^e}(x)$ are distinct. The dimension of $\C_{(1,e)}$ is equal to $n-(m+\ell_e)$, 
where $\ell_e=|C_e|$ and $C_e$ is the $p$-cyclotomic coset modulo $n$ containing $e$.  
The cyclic code $\C_{(1,e)}$ is defined by the pair of monomials $x$ and $x^e$ over $\gf(q)$. 

When $p=2$, it was proved in \cite{CCD,CCZ,LintWils} that the binary code $\C_{(1,e)}$ has 
parameters $[2^m-1, 2^m-1-2m, 5]$ if and only if $x^e$ is an APN monomial over $\gf(2^m)$. 
When $p$ is odd and $x^e$ is a planar monomial over $\gf(q)$, the codes $\C_{(1,e)}$ were 
dealt with in \cite{CDY05,YCD06}. When $p>3$ and $\ell_e=m$, the code $\C_{(1,e)}$ 
has minimum distance 2 or 3 which may not be interesting. In this paper, we study the case that 
$p=3$ only and prove that the ternary cyclic code $\C_{(1,e)}$ is optimal for many classes of 
monomials $x^e$ over $\gf(3^m)$.    

\section{A basic theorem about the ternary cyclic codes $\C_{(1,e)}$}\label{sec-tern} 

From now on we consider only the code $\C_{(1,e)}$ for $p=3$. We are mostly interested in the 
case that $e \not\in C_1$ and $\ell_e=|C_e|=m$. In this case, the dimension of the code is 
equal to $n-2m=q-1-2m$. 

The following theorem is the fundamental result of this paper and will be used frequently in 
subsequent sections. 

\begin{theorem}\label{thm-terncmain}
Let $e \not\in C_1$ and $\ell_e=|C_e|=m$. The cyclic code $\C_{(1,e)}$ has parameters 
$[3^m-1, 3^m-1-2m, 4]$ if and only if the following conditions are satisfied: 
\begin{itemize}
\item[C1:] $e$ is even; 
\item[C2:] the equation $(x+1)^e+x^e+1=0$ has the only solution $x=1$ in $\gf(q)^*$; and 
\item[C3:] the equation $(x+1)^e-x^e-1=0$ has the only solution $x=0$ in $\gf(q)$.  
\end{itemize}
\end{theorem}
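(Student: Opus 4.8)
The plan is to analyze the code $\C_{(1,e)}$ via its dual and the trace representation, then convert the minimum-distance condition into statements about low-weight codewords. Since $\C_{(1,e)}$ has generator polynomial $m_\alpha(x)m_{\alpha^e}(x)$ and both cyclotomic cosets have size $m$, the dimension is $q-1-2m$, so the only thing to establish is that $d=4$, i.e.\ that there is no nonzero codeword of weight $1$, $2$, or $3$, and that at least one codeword of weight $4$ exists. A nonzero codeword of weight $w$ with support positions $i_1,\dots,i_w$ and nonzero coefficients $c_1,\dots,c_w \in \gf(3)^*$ corresponds, via evaluation of the parity-check equations, to a solution of the system $\sum_{j=1}^w c_j \alpha^{i_j} = 0$ and $\sum_{j=1}^w c_j \alpha^{e i_j} = 0$. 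So the minimum distance is $\ge 4$ exactly when neither of the following has a solution: (i) weight $\le 2$, and (ii) weight $3$.

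First I would dispose of weights $1$ and $2$ and extract condition C1. Weight $1$ is impossible since it forces $c_1\alpha^{i_1}=0$. For weight $2$, writing $y_1=\alpha^{i_1}$, $y_2=\alpha^{i_2}$ distinct in $\gf(q)^*$ and $c_1,c_2\in\{1,2\}$, the first equation gives $c_1 y_1 = -c_2 y_2$, so $y_1/y_2 \in \gf(3)^*$, hence $y_1 = -y_2$ (the case $y_1=y_2$ being excluded); substituting into the second equation $c_1 y_1^e + c_2 y_2^e = 0$ and using $y_1^e = (-1)^e y_2^e$, this is solvable if and only if $(-1)^e = 1$ coincides with the sign forced by the first equation being consistent — i.e.\ a weight-$2$ codeword exists precisely when $e$ is odd. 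Thus excluding weight $2$ is equivalent to C1. (I should double-check the bookkeeping of the $c_j$ here; this is the kind of short sign computation where the paper's conventions matter, but it is routine.)

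Next, the heart of the proof: weight-$3$ codewords. A weight-$3$ codeword gives distinct $y_1,y_2,y_3 \in \gf(q)^*$ and $c_1,c_2,c_3\in\gf(3)^*$ with $\sum c_j y_j = 0$ and $\sum c_j y_j^e = 0$. The strategy is to normalize: after scaling the whole codeword (cyclic codes are closed under this only up to the field action — but in fact we may cyclically shift so that $i_3=0$, i.e.\ $y_3=1$, and then scale the codeword by a constant in $\gf(3)^*$ so that $c_3=1$; actually scaling by a field element $\lambda$ replaces $y_j$ by $\lambda y_j$, which preserves the first equation and multiplies the second by $\lambda^e$, so this is legitimate and lets us assume $y_3=1$). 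With $y_3=1$, $c_3=1$, set $a=c_1 y_1$, $b=c_2 y_2$; the first equation becomes $a+b+1=0$, i.e.\ $b = -1-a$. Since $c_1,c_2\in\gf(3)^*$, we have $y_1 = c_1^{-1}a = \pm a$ and $y_2 = \pm b = \pm(-1-a)$. Plugging into the second equation and chasing the four sign choices, each case reduces to an equation of the shape $(x+1)^e \pm x^e \pm 1 = 0$ for $x = \pm a$; using $e$ even (so $(\pm x)^e = x^e$) collapses these to the two equations in C2 and C3. The distinctness constraints $y_1\ne y_2$, $y_1\ne 1$, $y_2 \ne 1$ translate to $x \notin \{0, 1, -1\}$ (or similar), and one checks that the "spurious" solutions $x=1$ in C2 and $x=0$ in C3 correspond exactly to the degenerate cases where two of the $y_j$ coincide or a coefficient vanishes — which is why C2 and C3 are phrased as "the only solution is $x=1$" resp.\ "$x=0$" rather than "no solution." So: no weight-$3$ codeword $\iff$ C2 and C3 both hold.

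Finally, for the converse direction and to pin down $d=4$ exactly (not $d>4$), I would exhibit an explicit weight-$4$ codeword, or argue abstractly that since $\C_{(1,e)}$ is a proper subcode of the $[n,n-m]$ Hamming-type code defined by $m_\alpha(x)$ alone (which has minimum distance $3$) — no wait, that gives an upper bound of the wrong direction. Better: the Hamming bound or the sphere-packing / Singleton-type bound shows a code of these parameters cannot have $d\ge 5$ (indeed $[3^m-1,3^m-1-2m]$ codes meeting $d=5$ would have to be quasi-perfect in a way ruled out by the sphere-packing bound for $m\ge 2$), so once $d\ge 4$ is established, $d=4$ follows and optimality is immediate. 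The cleanest route is to note that the redundancy $2m$ is exactly the BCH-type bound forcing $d\le 4$ generically; I would state this via the sphere-packing bound: $3^{2m} < 1 + 2\binom{n}{1} + 4\binom{n}{2}$ fails to leave room for $d=5$, hence $d\le 4$, and combined with C1--C3 giving $d\ge 4$ we get $d=4$.

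The main obstacle I anticipate is the weight-$3$ case: correctly enumerating the sign/coefficient choices for $(c_1,c_2,c_3)\in(\gf(3)^*)^3$ after normalization, and verifying that the exceptional solutions ($x=1$ in C2, $x=0$ in C3) are precisely the degenerate codewords that do not actually have weight $3$ — getting that correspondence exactly right is what makes C2 and C3 the sharp conditions rather than something weaker or stronger. The weight-$\le 2$ analysis and the $d=4$ (not more) upper bound are comparatively mechanical.
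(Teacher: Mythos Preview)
Your proposal is correct and follows essentially the same route as the paper: translate low-weight codewords into the parity-check system $\sum c_j y_j=0$, $\sum c_j y_j^e=0$ over $\gf(3)^*\times\gf(q)^*$, show weight~$2$ is excluded iff $e$ is even (C1), reduce the weight-$3$ case by normalization and sign symmetry to the two one-variable equations of C2 and C3, and close with the sphere-packing bound to force $d\le 4$. The only cosmetic difference is that the paper normalizes by dividing through by $x_1$ and invokes permutation symmetry of $(c_1,c_2,c_3)$ up front to get exactly the two cases $(1,1,1)$ and $(1,1,-1)$, whereas you fix $y_3=1$, $c_3=1$ and then collapse the four sign choices for $(c_1,c_2)$ using $e$ even; both arrive at the same pair of equations.
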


\begin{proof} 
Clearly, the minimum distance $d$ of the code $\C_{(1,e)}$ cannot be 1. The code $\C_{(1,e)}$ 
has a codeword of Hamming weight 2 if and only if there exist two elements $c_1$ and $c_2$ 
in $\gf(3)^*$ and two distinct elements $x_1$ and $x_2$ in $\gf(3^m)^*$ such that 
\begin{eqnarray}\label{eqn-bbb0}
\left\{ \begin{array}{ll} 
           c_1x_1+c_2x_2=0 \\
           c_1x_1^e+c_2x_2^e=0. 
           \end{array} 
           \right.            
\end{eqnarray}
Note that $x_1 \ne x_2$. We have that $c_1=c_2$. Hence (\ref{eqn-bbb0}) is equivalent to the 
following set of equations: 
\begin{eqnarray*} 
\left\{ \begin{array}{ll} 
           x_1+x_2=0 \\
           x_1^e+x_2^e=0. 
           \end{array} 
           \right.            
\end{eqnarray*}
It then follows that $\C_{(1,e)}$ does not have a codeword of Hamming weight two if and only 
if $e$ is even. 

The code $\C_{(1,e)}$ 
has a codeword of Hamming weight 3 if and only if there exist three elements $c_1$, $c_2$ and
$c_3$ in $\gf(3)^*$ and three distinct elements $x_1$, $x_2$ and $x_3$ in $\gf(3^m)^*$ such 
that 
\begin{eqnarray}\label{eqn-bbb1}
\left\{ \begin{array}{ll} 
           c_1x_1+c_2x_2+c_3x_3=0 \\
           c_1x_1^e+c_2x_2^e+c_3x_3^e=0. 
           \end{array} 
           \right.            
\end{eqnarray}

Due to symmetry it is sufficient to consider the following two cases. 

\subsubsection*{Case A, where $c_1=c_2=c_3=1$} 
In this case, (\ref{eqn-bbb1}) becomes 
\begin{eqnarray}\label{eqn-bbb2}
\left\{ \begin{array}{ll} 
           1+\frac{x_2}{x_1}+\frac{x_3}{x_1}=0 \\
           1+\left(\frac{x_2}{x_1}\right)^e+\left(\frac{x_3}{x_1}\right)^e=0.              
           \end{array} 
           \right.            
\end{eqnarray} 
Putting $y_1=x_2/x_1$ and $y_2=x_3/x_1$. Then $y_i \not\in \{0,1\}$. Hence (\ref{eqn-bbb2}) 
has a desired solution if and only if the equation 
$$ 
(y+1)^e+y^e+1=0
$$
has a solution $y \in \gf(3^m) \setminus \{0,1\}$. 

\subsubsection*{Case B, where $c_1=c_2=1$ and $c_3=-1$} 
In this case, (\ref{eqn-bbb1}) becomes 
\begin{eqnarray}\label{eqn-bbb3}
\left\{ \begin{array}{ll} 
           1+\frac{x_2}{x_1}-\frac{x_3}{x_1}=0 \\
           1+\left(\frac{x_2}{x_1}\right)^e-\left(\frac{x_3}{x_1}\right)^e=0.              
           \end{array} 
           \right.            
\end{eqnarray} 
Putting $y_1=x_2/x_1$ and $y_2=x_3/x_1$. Then $y_i \not\in \{0,1\}$. Hence (\ref{eqn-bbb3}) 
has a desired solution if and only if the equation 
$$ 
(y+1)^e-y^e-1=0
$$
has a solution $y \in \gf(3^m) \setminus \{0,1\}$. 

By the Sphere Packing bound \cite[Theorem 1.12.1]{HPbook}, the minimum distance $d$ is at most $4$. Hence $d=4$ if 
Conditions C1, C2 and C3 are satisfied. This completes the proof of this theorem.  
\end{proof} 

Any ternary linear code with parameters $[3^m-1, 3^m-1-2m, 4]$ is optimal in the sense that the minimum 
distance is maximal for the fixed length $3^m-1$ and the fixed dimension $3^m-1-2m$. In subsequent 
sections, we will present many classes of optimal ternary cyclic codes with these parameters. 

\section{The optimal cyclic codes defined by planar monomials $x^e$ over $\gf(3^m)$} 

We first prove the following lemma, where the first and the last conclusion should be known. 

\begin{lemma}\label{lem-lemPNAPN} 
If $x^e$ is APN or planar over $\gf(3^m)$, then 
\begin{itemize}
\item $e$ must be even; 
\item $\gcd(e, 3^m-1)=2$;  
\item $\ell_e=|C_e|=m$; and 
\item $e \not\in C_1$. 
\end{itemize}
\end{lemma}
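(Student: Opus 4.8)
The plan is to establish the four conclusions of Lemma~\ref{lem-lemPNAPN} by working directly with the defining property of APN/planar monomials, namely that the map $x \mapsto (x+a)^e - x^e$ is at most $2$-to-$1$ (respectively at most $1$-to-$1$) for every $a \in \gf(3^m)^*$. I would first prove that $e$ must be even. Suppose $e$ were odd. Then the substitution $x \mapsto -x-a$ leaves the difference $(x+a)^e - x^e$ fixed up to sign: indeed $((-x-a)+a)^e - (-x-a)^e = (-x)^e - (-(x+a))^e = -x^e + (x+a)^e$, so $x$ and $-x-a$ produce the same value $b$. These two preimages coincide only when $x = -x-a$, i.e.\ when $2x = -a$; since $p=3$ is odd, $2$ is invertible, so this happens for exactly one $x$, and for every other choice of $x$ we get a genuine pair. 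Taking $a \ne 0$ arbitrary and choosing $x$ outside the single fixed point, we get at least two preimages for some $b$ — which is consistent with APN but must be examined more carefully to rule out \emph{odd} $e$ entirely. The cleaner route is: for $e$ odd the Dembowski--Ostrom-type symmetry combined with the fact that $x \mapsto (x+a)^e - x^e$ is a polynomial of degree $e-1$ in $x$ of the right parity forces preimage counts that are inconsistent with the APN bound for $q = 3^m$ with $m \ge 2$ (and the case $m=1$ is degenerate since then $1 < e < q-1$ is impossible, so $e$ is vacuously even); I would cite \cite{CDY05} or argue via the divided difference. The upshot of this first step is condition $e$ even.

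Second, I would upgrade "$e$ even" to $\gcd(e, 3^m-1) = 2$. Write $d = \gcd(e, 3^m-1)$; since $3^m - 1$ is even and $e$ is even, $d$ is even, so it remains to show $d \le 2$, i.e.\ that no odd prime divides $d$. If $\ell \mid d$ for some prime $\ell > 2$, then $x \mapsto x^e$ is $d$-to-$1$ on $\gf(3^m)^*$ with $d \ge 2\ell \ge 6 > 2$. Now take $a = 1$ and $b = 0$: the equation $(x+1)^e = x^e$ is equivalent to $\left(\frac{x+1}{x}\right)^e = 1$ for $x \ne 0$, i.e.\ $\frac{x+1}{x}$ lies in the subgroup of $e$-th roots of unity, which has order $d$. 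As $x$ ranges over $\gf(3^m) \setminus \{0, -1\}$, the Möbius map $x \mapsto 1 + 1/x$ is a bijection onto $\gf(3^m) \setminus \{1\}$ wait — more carefully, $\frac{x+1}{x} = 1 + x^{-1}$ ranges bijectively over $\gf(3^m)^* \setminus \{1\}$ as $x$ ranges over $\gf(3^m)^* \setminus \{-1\}$. Hence the number of solutions $x$ of $(x+1)^e - x^e = 0$ with $x \ne 0$ equals $(d - 1)$ plus a possible contribution from $x = -1$ (for which $(x+1)^e = 0$ and $x^e = (-1)^e = 1$, so $x=-1$ is \emph{not} a solution since $e$ even gives $0 \ne 1$). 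Thus there are exactly $d - 1$ solutions in $\gf(3^m)^*$, and with $b=0, a=1$ the APN bound demands $d - 1 \le 2$, i.e.\ $d \le 3$; combined with $d$ even this forces $d = 2$. (For the planar case the bound is $1$, giving $d \le 2$, again $d = 2$.)

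Third, $\ell_e = |C_e| = m$ is then immediate from Lemma~\ref{lem-zero}, since we have just shown $\gcd(e, q-1) = 2$ and $1 \le e \le n-1$. Fourth, $e \notin C_1$: if $e \in C_1$ then $e \equiv 3^j \pmod{3^m-1}$ for some $j$, so $x^e = x^{3^j}$ as a function on $\gf(3^m)$, which is additive; then $(x+a)^e - x^e = a^{3^j} = a^e$ is constant in $x$, making the difference map $q$-to-$1$, wildly violating both the APN and planar bounds (for $m \ge 1$ with $e > 1$ this is a contradiction) — so $e \notin C_1$.

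The main obstacle I anticipate is the \emph{first} step — ruling out odd $e$ rigorously without circularity. The symmetry argument shows $x$ and $-x-a$ are paired but does not by itself produce a \emph{triple} preimage, so one cannot conclude a violation of APN from it alone; the honest argument needs either the explicit structure of APN exponents over $\gf(3^m)$, a counting/Weil-bound estimate on the number of solutions of $(x+1)^e - x^e = b$, or the observation that for odd $e$ the polynomial $(x+1)^e - x^e - b$ has degree $e - 1$, is even as a polynomial in $y = x + a/2 = x - a$ after completing the "square," and hence has solutions coming in pairs $\{y, -y\}$, so the fiber over a generic $b$ has even size — fine for APN — but the fiber over the \emph{special} $b$ corresponding to $y = 0$ is odd, and one must show some fiber nonetheless reaches size $> 2$; this is exactly where a careful case split (small $m$ by hand, large $m$ by a Weil-type bound) or a citation to the prior classification is needed. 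Everything after that is the routine subgroup/Möbius bookkeeping sketched above.
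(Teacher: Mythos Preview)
Your proof has a genuine gap in the first step, which you yourself flag as ``the main obstacle.'' You never actually prove that $e$ must be even; you offer a symmetry observation that (as you correctly note) does not by itself produce a triple preimage, and then gesture at Dembowski--Ostrom structure, Weil bounds, or a citation, none of which you carry out.

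The paper's argument for this step is a one-liner you missed: if $e$ is odd, then the equation $(x+1)^e - x^e = 1$ already has the three distinct solutions $x = 0, 1, -1$ in $\gf(3^m)$. Indeed, $x=0$ gives $1^e - 0 = 1$; $x=1$ gives $2^e - 1 = (-1)^e - 1 = -1 - 1 = 1$ in $\gf(3)$; and $x = -1$ gives $0^e - (-1)^e = 0 - (-1) = 1$. Three preimages of $b=1$ under $x \mapsto (x+1)^e - x^e$ contradicts the APN bound (and a fortiori the planar bound). No Weil estimates, no case splits, no citations needed.

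Your remaining steps are fine. Step~2 (showing $\gcd(e, 3^m-1) = 2$ by counting solutions of $(x+1)^e = x^e$ via the substitution $x \mapsto 1 + x^{-1}$) is essentially the paper's own argument in slightly different packaging: the paper phrases it as taking four $e$-th roots of unity $1,-1,y_2,y_3$ and setting $x_i = 1/(y_i-1)$, which is the inverse of your M\"obius map. Step~3 matches the paper exactly (both invoke Lemma~\ref{lem-zero}). For step~4 you give a valid alternative to the paper's argument; the paper instead observes that $e$ even makes $3^i e - 1$ odd while $3^m - 1$ is even, so $3^i e \not\equiv 1 \pmod{3^m-1}$ for any $i$, whereas your Frobenius-additivity argument also works and is arguably more conceptual.
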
 

\begin{proof} 
If $e$ is odd, then the equation $(x+1)^e-x^e=1$ will have three solutions $x=0$ and $x=\pm 1$. 
By definition $x^e$ is not planar and APN. This proves the first conclusion. 

Let $x^e$ be planar or APN. Then $e$ must be even. In this case, the equation $(x+1)^e-x^e=0$ 
has already one solution $x=1$. If $\gcd(e, 3^m-1)>2$, then $\gcd(e, 3^m-1) \ge 4$. In this case, the 
equation $y^e=1$ has at least four distinct solutions, $1$, $y_1=-1$, $y_2$ and $y_3$. Define $x_i=1/(y_i-1)$ 
for all $i \in \{1,2,3\}$. Then $x_1$, $x_2$ and $x_3$ are three distinct solutions of the equation 
$(x+1)^e-x^e=0$. This is contrary to the assumption that $x^e$ is planar or APN.  Hence $\gcd(e, 3^m-1)=2$. 

We now prove the third conclusion. Since $\gcd(e, 3^m-1)=2$, $(3^m-1)/2$ must divide $3^{\ell_e}-1$. 
Hence $\ell_e=m$. 

Note that $3^i e -1$ is odd for any $i$ and $3^m-1$ is even. We have $3^i e \not\equiv 1 \pmod{3^m-1}$. 
Hence $e \not\in C_1$. 
\end{proof} 

The following theorem was proved by Carlet, Ding and Yuan in \cite{CDY05}. For completeness, we report it 
here and present a proof for it. 

\begin{theorem}\label{thm-CDY05}
{\rm (Carlet-Ding-Yuan \cite{CDY05})}   If $x^e$ is planar over $\gf(3^m)$, then $\C_{(1,e)}$ is an optimal 
ternary cyclic code with parameters $[3^m-1, 3^m-1-2m, 4]$.  
\end{theorem}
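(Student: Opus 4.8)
The plan is to deduce this theorem directly from Theorem~\ref{thm-terncmain} and Lemma~\ref{lem-lemPNAPN}. Since $x^e$ is planar over $\gf(3^m)$, Lemma~\ref{lem-lemPNAPN} already gives us $e \not\in C_1$ and $\ell_e = |C_e| = m$, so the code $\C_{(1,e)}$ has length $3^m-1$ and dimension $3^m-1-2m$ as required. It therefore suffices to verify Conditions C1, C2 and C3 of Theorem~\ref{thm-terncmain}; then the code has minimum distance $4$ and is optimal in the stated sense.

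Condition C1 is immediate: Lemma~\ref{lem-lemPNAPN} tells us $e$ is even. For the remaining two conditions I would translate the planarity hypothesis into statements about the difference function $f(x) = (x+1)^e - x^e$ (taking $a = 1$ in the definition of planar). Planarity of $x^e$ means that for every $b \in \gf(3^m)$ the equation $(x+1)^e - x^e = b$ has exactly one solution. First I would handle C3: the equation $(x+1)^e - x^e - 1 = 0$ is the case $b = 1$; since $x = 0$ is visibly a solution and planarity forces uniqueness, $x = 0$ is the only solution, which is exactly C3.

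For C2 I would argue as follows. The equation $(x+1)^e + x^e + 1 = 0$ can be rewritten, using that $e$ is even so $x^e = (-x)^e$, as $(x+1)^e - (-x)^e = -1$, i.e.\ substituting $y = -x$ gives $(1 - y)^e - y^e = -1$, and then replacing $y$ by $-y$ again (legitimate since $e$ is even, so $(1-y)^e = (y-1)^e = ((-1)(1-y))^e$... ) one arrives at an instance of $(z+1)^e - z^e = b$ for a specific constant $b$. I would check that $x = 1$ is indeed a solution of $(x+1)^e + x^e + 1 = 0$ in $\gf(3)^*$: this reads $2^e + 1^e + 1 = (-1)^e + 2 = 1 + 2 = 3 = 0$ in $\gf(3)$, using that $e$ is even so $2^e = (-1)^e = 1$. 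Since this transformed equation is again an instance of the planar difference equation with $a=1$, it has at most one solution, hence $x = 1$ is the only one in $\gf(q)^*$, giving C2.

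The main obstacle is getting the substitution in C2 exactly right so that $(x+1)^e + x^e + 1 = 0$ genuinely becomes $(z+1)^e - z^e = b$ for a single fixed $b$ (not a family), so that planarity with the fixed shift $a = 1$ applies; a careless substitution might introduce a variable shift $a$ and break the argument. Once the three conditions are in hand, Theorem~\ref{thm-terncmain} finishes the proof, and optimality follows from the Sphere Packing bound as already noted there.
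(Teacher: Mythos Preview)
Your overall plan is sound, and your treatment of the dimension, of Condition~C1, and of Condition~C3 matches the paper's argument exactly. The gap is in Condition~C2, and it is precisely the obstacle you yourself flag at the end. The substitution you sketch does not do what you claim. First, there is a sign slip: from $(x+1)^e + x^e + 1 = 0$ and $x^e = (-x)^e$ you obtain $(x+1)^e + (-x)^e = -1$, not $(x+1)^e - (-x)^e = -1$. More importantly, even after correcting the sign and carrying through the substitutions $y=-x$ and $z=y-1$, you land back on $(z+1)^e + z^e + 1 = 0$, the very equation you started with; you never reach a difference equation $(z+1)^e - z^e = b$ with a \emph{fixed} $b$. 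Indeed, if $x$ satisfies the C2 equation then $(x+1)^e - x^e = -2x^e - 1 = x^e - 1$, so the value of the difference map at $x$ genuinely depends on $x$, and bijectivity of the single map $u \mapsto (u+1)^e - u^e$ cannot by itself bound the number of solutions.

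The paper closes this gap not by avoiding a variable shift but by embracing it. Assuming a solution $x \ne 1$ of $(x+1)^e + x^e + 1 = 0$, one checks (using $e$ even and characteristic $3$) that
\[
\bigl(x + (1-x)\bigr)^e - x^e \;=\; 1 - x^e \;=\; \bigl((-x-1) + (1-x)\bigr)^e - (-x-1)^e,
\]
so with $a = 1-x \ne 0$ the two elements $x$ and $-x-1$ have the same image under $u \mapsto (u+a)^e - u^e$. Since $x = -x-1$ forces $x = 1$ in characteristic $3$, these two preimages are distinct, contradicting planarity. The point is that planarity gives bijectivity of the difference map for \emph{every} nonzero shift $a$, so letting $a$ depend on the putative solution is perfectly legitimate; it is the missing idea in your C2 argument.
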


\begin{proof}
Let $x^e$ be planar. By Lemma \ref{lem-lemPNAPN}, $e$ is even. Notice that $x=0$ is already a solution of the 
equation $(x+1)^e-x^e-1=0$. By the definition of planar functions, this equation cannot have other solutions. 
Hence Condition C3 in Theorem \ref{thm-terncmain} is met. 

We now prove that Condition C2 in Theorem \ref{thm-terncmain} holds. Suppose on the contrary that  
$(-x-1)^e+x^e+1=0$ for some $x \in \gf(q) \setminus \{1\}.$ Then 
$$ 
1^e-x^e=x^e-(-x-1)^e,  
$$   
that is 
$$
[x+(1-x)]^e - x^e = [-x-1 +(1-x)]^e-(-x-1)^e. 
$$
Note that $x \ne -x-1$ as $x \ne 1$. This is contrary to the assumption that $x^e$ is planar. 
Therefore  Condition C2 in Theorem \ref{thm-terncmain} is also satisfied. 

By Lemma \ref{lem-lemPNAPN} and Lemma \ref{lem-zero}, the dimension of this code is equal to $q-1-2m$. The desired conclusions 
then follow from Theorem \ref{thm-terncmain}. 
\end{proof}

The following is a list of known planar monomials over $3^m$: 
\begin{itemize} 
\item $x^2$. 
\item $x^{3^h+1}$, where $m/\gcd(m,h)$ is odd (\cite{DO}).  
\item $x^{(3^h+1)/2}$, where $\gcd(m,h)=1$ and $h$ is odd (\cite{CM}). 
\end{itemize} 
Each of these planar monomials gives a class of optimal ternary cyclic codes with parameters 
$[3^m-1, 3^m-1-2m, 4]$. 
More planar functions can be found in \cite{CM,DY06,ZKW,Zha}. 

\begin{example} 
Let $(m, h)=(4, 3)$ and let $e=(3^h+1)/2=14$. Let $\alpha$ be the generator of $\gf(3^m)^*$ with 
$\alpha^4 +2 \alpha^3 +2=0$. Then $\C_{(1,e)}$ is a ternary cyclic code with parameters $[80, 72, 4]$ 
and generator polynomial $x^8 + 2x^5 + x^3 + 2x^2 + 2$. 

The dual of $\C_{(1,e)}$ is a ternary cyclic code with parameters $[80, 8, 48]$ and weight enumerator 
$$ 
1+1320x^{48}+2400x^{51} +80x^{54}+1920x^{57}+840x^{60}. 
$$ 
This dual code has the same parameters as the best known ternary linear code with parameters $[80, 8, 48]$ 
in the Database. 
The upper bound on the minimum distance $d$ of any ternary linear code of length 80 and dimension 8 
is 49. 
\end{example} 

The following theorem is the partial inverse conclusion of Theorem \ref{thm-CDY05} when $e$ is 
of a special form. 

\begin{theorem}\label{thm-inverseCD05}
Let $e=(3^h+1)/2$, where $1 \le h \le m-1$. Then the ternary cyclic code $\C_{(1,e)}$ has parameters 
$[3^m-1, 3^m-1-2m, 4]$ if and only if $h$ is odd and $\gcd(m, h)=1$, i.e., if and only if $x^e$ is planar 
over $\gf(3^m)$.   
\end{theorem}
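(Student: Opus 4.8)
The plan is to prove the two implications separately; the forward direction is immediate, and essentially all the work is in the converse.

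For the ``if'' direction, suppose $h$ is odd and $\gcd(m,h)=1$. Then $x^{(3^h+1)/2}$ is exactly one of the planar monomials listed above (the Coulter--Matthews family), so Theorem~\ref{thm-CDY05} applies verbatim and gives that $\C_{(1,e)}$ is an optimal ternary cyclic code with parameters $[3^m-1,3^m-1-2m,4]$. Nothing further is needed here.

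For the converse, I would assume $\C_{(1,e)}$ has parameters $[3^m-1,3^m-1-2m,4]$ and argue as follows. First, the dimension being $3^m-1-2m$ forces $e\notin C_1$ and $\ell_e=m$ (otherwise the generator polynomial $m_\alpha(x)m_{\alpha^e}(x)$ would have degree $m$ or strictly smaller than $2m$), so Theorem~\ref{thm-terncmain} is applicable and Conditions C1, C2, C3 all hold. Condition C1 says $e$ is even; since $3^h\equiv 3\pmod 4$ for odd $h$ and $3^h\equiv 1\pmod 4$ for even $h$, the integer $e=(3^h+1)/2$ is even exactly when $h$ is odd, so C1 forces $h$ odd. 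The remaining task, and the crux of the theorem, is to deduce $\gcd(m,h)=1$ from C2 and C3. I would argue by contradiction: assume $d:=\gcd(m,h)>1$. Since $h$ is odd, $d$ is odd, hence $d\ge 3$, and $h/d$ is odd. The key computational point is a subfield degeneration of $x^e$ on $\gf(3^d)\subseteq\gf(3^m)$: because $d\mid h$ we have $3^h\equiv 1\pmod{3^d-1}$, and writing $3^h-1=(3^d-1)\sum_{i=0}^{h/d-1}3^{di}$ with the sum $\equiv h/d\equiv 1\pmod 2$ yields $e\equiv 1+\tfrac{3^d-1}{2}\pmod{3^d-1}$. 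Hence for every $x\in\gf(3^d)^*$ one has $x^e=x\cdot x^{(3^d-1)/2}=x\,\eta_d(x)$, where $\eta_d$ is the quadratic character of $\gf(3^d)$. Consequently, any $y\in\gf(3^d)$ for which both $y$ and $y+1$ are nonzero squares in $\gf(3^d)$ satisfies $(y+1)^e-y^e-1=(y+1)-y-1=0$ with $y\ne 0$, contradicting Condition C3. Finally, such a $y$ exists: the standard quadratic-character evaluation $\sum_{y\in\gf(3^d)}\eta_d(y^2+y)=-1$ (together with $\sum_{y}\eta_d(y)=0$ and $\eta_d(-1)=-1$ for odd $d$) shows the number of admissible $y$ equals $(3^d-3)/4$, which is at least $6$ for $d\ge 3$. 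This contradiction completes the converse, and the final ``i.e.'' in the statement then follows by combining the equivalence with the contrapositive of Theorem~\ref{thm-CDY05}.

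The main obstacle is this last step of the converse, and within it the conceptual heart is spotting that on the subfield $\gf(3^d)$ the monomial $x^e$ collapses to $x\,\eta_d(x)$ precisely because $h/d$ is odd; this turns Condition C3 into the trivially satisfiable identity $(y+1)-y-1=0$ evaluated at consecutive quadratic residues of $\gf(3^d)$. Everything else is routine: the dimension bookkeeping that licenses the use of Theorem~\ref{thm-terncmain}, the elementary mod-$4$ parity computation pinning down when $e$ is even, and the character-sum count of pairs of consecutive squares in $\gf(3^d)$. The one place I would be careful is verifying the congruence $e\equiv 1+\tfrac{3^d-1}{2}\pmod{3^d-1}$ and that $-1$ is a non-square in $\gf(3^d)$ for odd $d$, since these are what guarantee the subfield reduction and the positivity of the count.
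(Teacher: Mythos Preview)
Your argument is correct, but it takes a genuinely different route from the paper's. The paper does not split into a subfield reduction and a character-sum count; instead it multiplies the polynomials in Conditions C2 and C3 and observes the identity
\[
\bigl((x+1)^e+x^e+1\bigr)\bigl((x+1)^e-x^e-1\bigr)=(x+1)^{3^h+1}-(x^{(3^h+1)/2}+1)^2=x\bigl(x^{(3^h-1)/2}-1\bigr)^2,
\]
so that C2 and C3 hold simultaneously if and only if $x^{(3^h-1)/2}=1$ has no solution outside $\{\pm 1\}$, i.e.\ $\gcd((3^h-1)/2,\,3^m-1)=2$, which is exactly the condition that $h$ is odd and $\gcd(h,m)=1$. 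This single algebraic factorization handles both directions of the equivalence at once with essentially no machinery. Your approach, by contrast, separates the two directions, dispatches sufficiency via planarity, and for necessity locates the obstruction inside the subfield $\gf(3^d)$ by the nice observation that $x^e$ degenerates to $x\,\eta_d(x)$ there; this is more conceptual and explains \emph{where} the bad solutions come from, but it costs you the extra step of counting consecutive quadratic residues. Both are valid; the paper's identity is shorter, while your subfield viewpoint is arguably more illuminating and would generalize to situations where a clean factorization of the product is not available.
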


\begin{proof}
In the proof of Theorem \ref{thm-terncmain} we see that $\C_{(1,e)}$ has no codeword of Hamming weight 
2 if and only if $e$ is even. Hence $h$ must be odd if $d=4$. 

Notice that multiplying the left sides of the two equations in Conditions C2 and C3 leads to 
\begin{eqnarray*}
\lefteqn{\left( (x+1)^e+x^e+1\right)\left( (x+1)^e-x^e-1\right)} \\  
&=& (x+1)^{2e}-(x^e+1)^2 \\
&=& (x+1)^{3^h+1}-(x^{(3^h+1)/2}+1)^2 \\
&=& x\left( x^{(3^h-1)/2} -1\right)^2.
\end{eqnarray*} 
It then follows that the equation 
$$ 
(x+1)^{2e}-(x^e+1)^2=0 
$$
does not have a solution $x \in \gf(q) \setminus \{0, \pm 1\}$ if and only if 
$$ 
\gcd((3^h-1)/2, 3^m-1)=2, 
$$
which is equivalent to that $h$ is odd and $\gcd(h,m)=1$. 

When $h$ is odd and $\gcd(h,m)=1$, we have $\gcd(e, 3^m-1)=2$. It then follows from Lemma 
\ref{lem-zero} that $\ell_m=m$. Since $1 \not\in C_e$, the dimension of this code is equal to 
$3^m-1-2m$. 
\end{proof}

We shall need the following lemma later. 

\begin{lemma}\label{lem-e31} 
Let $e=3^h+1$, where $0 \le h \le m-1$. When $m$ is odd, $\ell_e=m$. When $m$ is even, 
\begin{eqnarray}
\ell_e = \left\{ \begin{array}{ll} 
                        \frac{m}{2} & \mbox{ if } h=\frac{m}{2} \\
                         m & \mbox{ if } h \ne \frac{m}{2}. 
                         \end{array}
                         \right.                          
\end{eqnarray}
\end{lemma}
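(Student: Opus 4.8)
The plan is to compute $\ell_e$ directly from the defining congruence. By definition, $\ell_e$ is the smallest positive integer $\ell$ with $e(3^\ell-1) \equiv 0 \pmod{3^m-1}$, i.e. with $(3^m-1) \mid (3^h+1)(3^\ell-1)$. So the first step is to understand $\gcd(3^h+1, 3^m-1)$. The standard fact is that $\gcd(3^h-1, 3^m-1) = 3^{\gcd(h,m)}-1$, and from this one derives $\gcd(3^h+1, 3^m-1) = 2$ when $m/\gcd(h,m)$ is odd, and $\gcd(3^h+1, 3^m-1) = 3^{\gcd(h,m)}+1$ when $m/\gcd(h,m)$ is even. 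I would record this as the key arithmetic input.

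Next I would split into the cases of the statement. When $m$ is odd, $m/\gcd(h,m)$ is odd as well (an odd number has only odd divisors), so $\gcd(3^h+1,3^m-1)=2$; then $(3^m-1)\mid(3^h+1)(3^\ell-1)$ forces $\frac{3^m-1}{2}\mid(3^\ell-1)$, which by the gcd fact again (or by the fact that $3^\ell-1$ divisibility by something just below $3^m-1$ needs $m\mid\ell$) gives $\ell_e=m$. This case is essentially Lemma~\ref{lem-zero} applied with $\gcd(e,q-1)=2$, though here $e=3^h+1$ itself need not have $\gcd(e,3^m-1)=2$ unless $m$ is odd — so I would be careful to invoke Lemma~\ref{lem-zero} only in that subcase, or just argue directly.

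For $m$ even, I would further split on $h$. If $h=m/2$, then $e=3^{m/2}+1$ divides $3^m-1=(3^{m/2}-1)(3^{m/2}+1)$, so $e\mid 3^m-1$; hence $\frac{3^m-1}{e}=3^{m/2}-1$, and we need the smallest $\ell$ with $(3^{m/2}-1)\mid(3^\ell-1)$, which is $\ell=m/2$; so $\ell_e=m/2$. If $h\neq m/2$, I would show $\ell_e=m$. Write $d=\gcd(h,m)$ and $k=m/d$. If $k$ is odd then $\gcd(3^h+1,3^m-1)=2$ and we are back in the previous situation giving $\ell_e=m$. If $k$ is even then $\gcd(3^h+1,3^m-1)=3^d+1$, so the congruence becomes $\frac{3^m-1}{3^d+1}\mid(3^\ell-1)$; I would then argue that the order of $3$ modulo $\frac{3^m-1}{3^d+1}$ is still $m$ whenever $h\neq m/2$ (equivalently $d\neq m/2$, using $d\mid h$, $d\mid m$ and $k$ even so $d\le m/2$ with $d=m/2$ only if $h=m/2$). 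The cleanest route is: suppose $\ell_e=\ell<m$; then $\ell\mid m$, $\ell<m$, and $(3^m-1)\mid(3^h+1)(3^\ell-1)$, so using $\gcd(3^\ell-1,3^m-1)=3^\ell-1$ and the gcd computation, $(3^m-1)\mid(3^h+1)(3^{m/2}-1)$ would follow from the largest proper divisor case $\ell=m/2$; deduce $\frac{3^m-1}{3^{m/2}-1}=3^{m/2}+1$ divides $3^h+1$, forcing $h\equiv m/2\pmod m$ in the exponent by comparing $3$-adic valuations or sizes, hence $h=m/2$ — contradiction.

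The main obstacle will be the last point: showing that $3^{m/2}+1 \mid 3^h+1$ implies $h=m/2$ (for $0\le h\le m-1$). I expect to handle it by noting $3^{m/2}+1\mid 3^h+1$ together with $3^{m/2}+1\mid 3^m-1$ gives $3^{m/2}+1\mid \gcd(3^h+1,3^m-1)$, which we already computed equals $3^{\gcd(h,m)}+1$ (when $m/\gcd(h,m)$ is even) or $2$ (otherwise); comparing sizes forces $\gcd(h,m)\ge m/2$, hence $\gcd(h,m)=m/2$, and then since $\gcd(h,m)\mid h$ and $0\le h\le m-1$ the only possibility consistent with $\gcd(h,m)=m/2$ and $m/\gcd(h,m)=2$ even is $h=m/2$. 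Once that pinch is in place the rest is bookkeeping.
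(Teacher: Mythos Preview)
Your approach shares the paper's key ingredients---the formula for $\gcd(3^h+1,3^m-1)$ and a size comparison---but your case analysis has a gap in the final sub-case. When $m$ is even, $h\ne m/2$, and $m/\gcd(h,m)$ is even, you assume $\ell_e=\ell<m$ and then write that the divisibility $(3^m-1)\mid(3^h+1)(3^{m/2}-1)$ ``would follow from the largest proper divisor case $\ell=m/2$''. But $\ell\mid m$ and $\ell<m$ only give $\ell\le m/2$, not $\ell\mid m/2$; for instance with $m=12$ one could a~priori have $\ell=4$, and $3^4-1\nmid 3^6-1$, so you cannot pass from $(3^h+1)(3^{\ell}-1)$ to $(3^h+1)(3^{m/2}-1)$. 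The fix is easy with your own tools: from $(3^m-1)\mid(3^h+1)(3^{\ell}-1)$ and $3^{\ell}-1\mid 3^m-1$ one gets $\frac{3^m-1}{3^{\ell}-1}\mid 3^h+1$, hence $\frac{3^m-1}{3^{\ell}-1}\mid\gcd(3^h+1,3^m-1)$; since $\ell\le m/2$ this quotient is at least $3^{m/2}+1$, and now your size comparison with $3^{\gcd(h,m)}+1$ goes through for every such $\ell$, not just $\ell=m/2$.

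The paper sidesteps the whole sub-case split. For any $h\ne m/2$ (regardless of the parity of $m$ or of $m/\gcd(h,m)$) it bounds $\gcd(3^h+1,3^m-1)<3^{m/2}+1$, and for any $1\le j<m$ it bounds $\gcd(3^j-1,3^m-1)=3^{\gcd(j,m)}-1\le 3^{m/2}-1$; the product of the two gcds is then strictly below $3^m-1$, so $(3^m-1)\nmid(3^h+1)(3^j-1)$ for all $j<m$ in one stroke. This uniform bound is both shorter and avoids the issue above.
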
 

\begin{proof}
First of all, we have 
\begin{eqnarray}
\gcd(3^j-1, 3^m-1)=3^{\gcd(j,m)}-1. 
\end{eqnarray}
It is also known that 
\begin{eqnarray}
\lefteqn{\gcd(3^h+1, 3^m-1)} \nonumber \\ 
&=& \left\{  
\begin{array}{ll}
2                         &  \mbox{ if } m/\gcd(m,h) \mbox{ is odd} \\
3^{\gcd(h,m)}+1 &  \mbox{ if } m/\gcd(m,h) \mbox{ is even.}
\end{array}
\right.  
\end{eqnarray}

When $h \ne m/2$, we have 
$$ 
\gcd(3^h+1, 3^m-1) < 3^{m/2}+1. 
$$
Since $1 \le j < m$, we have 
$$ 
\gcd(3^j-1, 3^m-1)=3^{\gcd(j,m)}-1 \le 3^{m/2}-1. 
$$
It then follows that 
$$ 
\gcd(3^h+1, 3^m-1) \gcd(3^j-1, 3^m-1) < 3^m-1
$$
for all $1 \le j <m$. Hence 
$$ 
\gcd(3^h+1, 3^m-1) \gcd(3^j-1, 3^m-1) \not\equiv 0 \pmod{3^m-1} 
$$
for all $1 \le j <m$. Therefore, $\ell_e=m$ if $h \ne m/2$. 

When $h=m/2$, let $j=m/2$. Then $(3^j-1)(3^h+1) \equiv 0 \pmod{3^m-1}.$ 
It then follows from the discussions above that 
$\ell_e=m/2.$

\end{proof}

\begin{theorem}\label{thm-inverseCD052}
Let $e=3^h+1$, where $0 \le h \le m-1$. Then the ternary cyclic code $\C_{(1,e)}$ has parameters 
$[3^m-1, 3^m-1-2m, 4]$ if and only if $m/\gcd(m, h)$ is odd, i.e., if and only if $x^e$ is planar 
over $\gf(3^m)$.   
\end{theorem}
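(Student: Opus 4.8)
The plan is to invoke Theorem~\ref{thm-terncmain} and reduce the claim to verifying Conditions C1, C2, and C3 for $e=3^h+1$, exactly as was done for the planar case in Theorem~\ref{thm-CDY05} and for $e=(3^h+1)/2$ in Theorem~\ref{thm-inverseCD05}. The key observation driving the whole argument is that $e=3^h+1$ is a Dembowski-Ostrom-type exponent: the polynomial $(x+1)^e-x^e-1$ simplifies dramatically because the Frobenius map $x\mapsto x^{3^h}$ is additive. Concretely, $(x+1)^{3^h+1}=(x^{3^h}+1)(x+1)=x^{3^h+1}+x^{3^h}+x+1$, so $(x+1)^e-x^e-1=x^{3^h}+x$, and $(x+1)^e+x^e+1=2x^{3^h+1}+x^{3^h}+x+2$. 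First I would record these two identities.

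Next I would handle the ``if'' direction. Assume $m/\gcd(m,h)$ is odd, which by the standard formula recalled in Lemma~\ref{lem-e31} means $\gcd(3^h+1,3^m-1)=2$, i.e. $x^e$ is planar (by the list of known planar monomials, or directly). For Condition C3, the equation $(x+1)^e-x^e-1=x^{3^h}+x=x(x^{3^h-1}+1)=0$ has only the solution $x=0$ in $\gf(q)$ precisely when $x^{3^h-1}=-1$ has no solution, i.e. when $\gcd(3^h-1,3^m-1)=3^{\gcd(h,m)}-1$ does not divide $(3^m-1)/2$ evenly in the relevant sense; the cleaner route is that $\gcd(2(3^h),3^m-1)=2\gcd(3^h,3^m-1)$ considerations force $x=0$ exactly when $m/\gcd(m,h)$ is odd — I would phrase this via the identity $x^{3^h}+x = x\cdot(x^{3^h-1}+1)$ and the observation that $-1$ is a $(3^h-1)$-th power iff $\gcd(3^h-1,3^m-1)$ is even with quotient $(3^m-1)/\gcd(3^h-1,3^m-1)$ odd... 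In fact the slickest approach mirrors Theorem~\ref{thm-inverseCD05}: multiply the two sides of C2 and C3 to get $(x+1)^{2e}-(x^e+1)^2$, which here equals $((x+1)^{3^h+1})^2 \cdot$(correction)$ - \ldots$; more directly $\left((x+1)^e+x^e+1\right)\left((x+1)^e-x^e-1\right)=(x+1)^{2e}-(x^e+1)^2=(x^{3^h}+x)(2x^{3^h+1}+x^{3^h}+x+2)$, and I would analyze this product polynomial's roots outside $\{0,\pm1\}$ using $\gcd$ computations. For the ``only if'' direction, I would suppose $m/\gcd(m,h)$ is even, invoke the second case of Lemma~\ref{lem-e31} together with the $\gcd$ formula to produce a genuine extra root of $x^{3^h}+x$ (when $\gcd(3^h-1,3^m-1)$ is even, $-1$ is a $(3^h-1)$-th power, giving $x\neq 0$ with $x^{3^h-1}=-1$), which violates C3, or alternatively observe $\ell_e$ may collapse and the dimension drops; either way $\C_{(1,e)}$ fails to have the stated parameters.

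The main obstacle will be the bookkeeping in Condition C2, i.e. showing $2x^{3^h+1}+x^{3^h}+x+2=0$ has no solution in $\gf(q)^*\setminus\{1\}$ under the hypothesis. Unlike C3, this cubic-in-Frobenius expression does not factor as transparently; I would try to factor it over $\gf(3)$ — noting $x=1$ is a root since $2+1+1+2=6\equiv0$, and $x=-1$ gives $2(-1)+1+(-1)+... $ which one checks — and then argue the remaining factor has no roots outside the excluded set, again reducing to a $\gcd$ condition of the form $\gcd(3^h+1,3^m-1)=2$. A clean way to sidestep most of this is to note that when $m/\gcd(m,h)$ is odd the monomial $x^e$ is planar (this is the first entry, up to the known classification, in the list of planar monomials preceding this theorem), so the ``if'' direction follows immediately from Theorem~\ref{thm-CDY05}, and only the ``only if'' direction — showing planarity is necessary — requires the explicit root-counting above. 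I would therefore structure the proof as: (i) the forward implication is Theorem~\ref{thm-CDY05}; (ii) for the converse, assume $d=4$, hence $e$ even (automatic here) and C3 holds, hence $x^{3^h}+x$ has only the root $0$, hence $\gcd(3^h-1,3^m-1)=3^{\gcd(h,m)}-1$ with $(3^m-1)/(3^{\gcd(h,m)}-1)$ odd, which by elementary arithmetic is equivalent to $m/\gcd(m,h)$ being odd, i.e. to planarity. The last equivalence is the one nontrivial number-theoretic lemma, and I expect it to be short.
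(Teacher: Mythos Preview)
Your proposal, after some meandering, settles on exactly the structure of the paper's proof: sufficiency follows immediately from Theorem~\ref{thm-CDY05} since $x^{3^h+1}$ is planar when $m/\gcd(m,h)$ is odd, and necessity is obtained by showing that Condition~C3 fails otherwise, via the factorization $(x+1)^e-x^e-1=x(x^{3^h-1}+1)$ and the observation that $-1$ is a $(3^h-1)$-th power in $\gf(3^m)^*$ precisely when $(3^m-1)/(3^{\gcd(h,m)}-1)$ is even, i.e.\ when $m/\gcd(m,h)$ is even. The paper carries out this last step by an explicit polynomial-gcd computation of $\gcd(x^{3^h-1}+1,\,x^{3^m-1}-1)$ and then exhibits a concrete root $x=\alpha^{(3^m-1)/(2(3^{\gcd(h,m)}-1))}$, whereas you phrase it as a solvability criterion for $x^{3^h-1}=-1$; these are equivalent and equally short, so there is no substantive difference in approach.
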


\begin{proof}
We first prove the necessity of the condition. 
Notice that 
$$ 
\gcd\left( x^{3^h-1}+1,   x^{3^h-1}-1\right)=1. 
$$
We have 
\begin{eqnarray*}
\lefteqn{\gcd\left( x^{3^h-1}+1,   x^{3^m-1}-1\right)} \\ 
&=& \frac{\gcd\left(x^{2(3^h-1)}-1,   x^{3^m-1}-1 \right)}{\gcd\left(x^{3^h-1}-1,   x^{3^m-1}-1 \right)} \\
&=& \frac{x^{\gcd(2(3^h-1), 3^m-1)}-1}{x^{\gcd(3^h-1, 3^m-1)}-1} \\
&=& \frac{x^{\gcd(3^h-1, 3^m-1) \gcd\left(2, \frac{3^m-1}{\gcd(3^h-1, 3^m-1)}\right)}-1} {x^{3^{^{\gcd(h,m)}}-1} -1} \\
&=& \left\{  
\begin{array}{ll}
1                                   & \mbox{ if } m/\gcd(m,h) \mbox{ is odd} \\
x^{3^{^{\gcd(h,m)}}-1}+1  & \mbox{ if } m/\gcd(m,h) \mbox{ is even.}
\end{array}
\right.  
\end{eqnarray*}

Suppose first that $m/\gcd(m,h)$ is even. Then $3^{\gcd(h,m)}-1$ divides $(3^m-1)/2.$ Let $\alpha$ be a generator 
of $\gf(3^m)^*$. Define then $x=\alpha^{\frac{3^m-1}{2(3^{^{\gcd(h,m)}}-1)}}$. Then  
$x^{3^{^{\gcd(h,m)}}-1}+1=0$. 

It is easily checked that 
$$ 
(x+1)^e-x^e-1=x \left( x^{3^h-1} +1 \right).  
$$
Hence $x=\alpha^{\frac{3^m-1}{2(3^{^{\gcd(h,m)}}-1)}}$ is a solution of $(x+1)^e-x^e-1=0$. 
This means that Condition C3 in Theorem \ref{thm-terncmain} is not met. So we have reached 
a contradiction. This 
proves the necessity of the condition in this theorem. 

We now prove the sufficiency of this condition. When $m/\gcd(m,h)$ is odd, it was proved in 
\cite{CM} that $x^e$ is planar. It then follows from Theorem \ref{thm-CDY05} that $\C_{(1,e)}$ 
has parameters $[3^m-1, 3^m-1-2m, 4]$. This completes the proof of this theorem. 
\end{proof}

\section{The optimal cyclic codes defined by APN monomials $x^e$ over $\gf(3^m)$} 

In this section, we present seven classes of optimal cyclic codes defined by APN monomials 
$x^e$ over $\gf(3^m)$. 

\begin{theorem}\label{thm-apntcodes}
The ternary cyclic code $\C_{(1,e)}$ has parameters $[3^m-1, 3^m-1-2m, 4]$ if $x^e$ is APN 
over $\gf(3^m)$.   
\end{theorem}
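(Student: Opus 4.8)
The plan is to verify the three conditions C1, C2, C3 of Theorem~\ref{thm-terncmain} directly from the APN property of $x^e$, reusing the argument already given for the planar case in Theorem~\ref{thm-CDY05} and adjusting it where the weaker hypothesis forces a change. First I would invoke Lemma~\ref{lem-lemPNAPN}: since $x^e$ is APN, the lemma immediately yields that $e$ is even (this is Condition C1), that $\gcd(e,3^m-1)=2$, that $\ell_e=|C_e|=m$, and that $e\notin C_1$. The last two facts guarantee that Theorem~\ref{thm-terncmain} applies and that the dimension of $\C_{(1,e)}$ is exactly $3^m-1-2m$, so it remains only to establish C2 and C3.

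For Condition C3 I would argue as follows. The equation $(x+1)^e-x^e-1=0$ can be rewritten, after the substitution of "increment by $1$", as $f(x+1)-f(x)=f(1)-f(0)$ where $f(x)=x^e$; that is, it is the equation $f(x+1)-f(x)=b$ with $a=1$ and $b=1^e-0^e=1$. Since $x=0$ is visibly a solution, the APN property says there is \emph{at most one more} solution. So a priori C3 could fail. The key observation is that the map $x\mapsto -1-x$ sends $x$ to another solution: if $x$ solves $(x+1)^e-x^e-1=0$, then replacing $x$ by $-1-x$ gives $(-x)^e-(-1-x)^e-1$, and using $e$ even this equals $x^e-(x+1)^e-1$... hmm, that is $-((x+1)^e-x^e)-1 = -(1)-1=-2\ne 0$ in general, so the involution does not directly preserve the solution set. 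Instead I would mimic the planar-case manipulation: from $(x+1)^e-x^e-1=0$ derive $[x+1]^e-x^e = 1^e-0^e$, i.e. $f$ takes the same difference value $1$ on the pair $(x, x+1)$ (translation by $a=1$) and on the pair $(0,1)$ (also translation by $a=1$); by APN there is exactly one solution besides $x=0$ unless these coincide, so the real content is to \emph{exclude} the single extra solution. This is where I expect the main obstacle: APN only bounds the number of solutions by $2$, whereas C3 demands exactly $1$, so one must show the putative second solution $x_0\ne 0$ cannot occur. I would try to show that if $x_0\ne 0$ solved it, then also $-1-x_0$ or $x_0^{-1}$-type transforms would produce a third solution of some related difference equation, contradicting APN; concretely, the identity $(x+1)^e-x^e-1 = (x+1)^e-x^e-1$ combined with substituting $x\to x_0$, $x\to 0$, $x\to$ (a third cleverly chosen point built from $x_0$) into $f(y+a)-f(y)=b$ for a common $a,b$.

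For Condition C2, the equation $(x+1)^e+x^e+1=0$ with $e$ even rewrites as $(-x-1)^e-x^e-(-1)^e\cdot$... more usefully, setting $z=-x-1$ we get $z^e = (x+1)^e = -x^e-1$, hence $z^e-(z+1)^e = z^e-(-x)^e = z^e - x^e$; I would reproduce exactly the chain of equalities in the proof of Theorem~\ref{thm-CDY05}: from $(-x-1)^e+x^e+1=0$ one obtains $1^e-x^e = x^e-(-x-1)^e$, then $[x+(1-x)]^e-x^e = [(-x-1)+(1-x)]^e-(-x-1)^e$, which exhibits $f$ achieving the same difference (with increment $a=1-x$) at the two distinct points $x$ and $-x-1$ — distinct because $x\ne 1$. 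If $1-x\ne 0$ this already contradicts APN \emph{provided} $x$ and $-x-1$ are the only two preimages, but APN permits two, so again one must rule out that this pair is the single "bad" pair. The resolution I would pursue: C2 asserts $x=1$ is the only solution in $\gf(q)^*$, so I must show $x=1$ is \emph{not} producing a second spurious solution; here I would exploit that if $x$ is a nonzero, non-$1$ solution of $(x+1)^e+x^e+1=0$ then so is $1/x$ (since $\gcd(e,q-1)=2$ makes $x\mapsto x^e$ well-behaved on the structure) or $-1-x$, generating a third solution to the associated difference equation and contradicting APN.

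In summary: Steps (i) apply Lemma~\ref{lem-lemPNAPN} to get C1, $\ell_e=m$, $e\notin C_1$, and the dimension; (ii) translate C2 and C3 into statements about the differential equation $f(x+a)-f(x)=b$ for $f(x)=x^e$; (iii) in each case locate the "obvious" solutions ($x=1$ for C2, $x=0$ for C3) and show that any additional solution would, via a symmetry substitution, manufacture enough distinct solutions of a single differential equation to violate the APN bound of $2$; (iv) conclude $d=4$ via Theorem~\ref{thm-terncmain} and note the Sphere-Packing bound already gives $d\le 4$, so the code is optimal. The hard part, as indicated, is step (iii): APN is a strictly weaker hypothesis than planar, so the slack of "two solutions allowed" must be absorbed by finding the right algebraic involution on the solution set in each of the two equations.
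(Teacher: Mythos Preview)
Your overall framework is exactly the paper's: invoke Lemma~\ref{lem-lemPNAPN} for C1 and the dimension, then for each of C2 and C3 show that a hypothetical extra solution would force \emph{three} distinct solutions of a single equation $f(u+a)-f(u)=b$, contradicting the APN bound of two. For C3 you in fact already have the right involution among your candidates: it is $x\mapsto x^{-1}$. If $x\ne 0$ solves $(x+1)^e-x^e-1=0$, dividing through by $x^e$ gives $(x^{-1}+1)^e-(x^{-1})^e-1=0$, so $x^{-1}$ is also a solution; since $e$ is even one checks directly that neither $x=1$ nor $x=-1$ can occur, hence $x^2\ne 1$, $x\ne x^{-1}$, and $\{0,x,x^{-1}\}$ are three distinct solutions of $(u+1)^e-u^e=1$. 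That is precisely the paper's argument for C3; your attempt with $x\mapsto -1-x$ indeed fails, as you noticed, but the $x^{-1}$ idea suffices on its own.

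For C2, however, your proposed route has a genuine gap. It is true that if $x\notin\{0,1\}$ solves $(x+1)^e+x^e+1=0$ then so do $1/x$ and $-1-x$, but this does \emph{not} yield a third solution of the \emph{same} difference equation: in your rewriting $[x+(1-x)]^e-x^e=[(-x-1)+(1-x)]^e-(-x-1)^e$ the increment is $a=1-x$, and replacing $x$ by $1/x$ changes $a$. So you have many solutions of the original C2 equation but only two solutions of any fixed difference equation, and APN is not violated. The paper's resolution is different and cleaner: write the C2 equation as the system $1+y+z=0$, $1+y^e+z^e=0$ (with $y=x$, $z=-1-x$), set $a=1-y$, $b=1-y^e$, and verify directly that all three of $u=1$, $u=y$, $u=z$ satisfy $(u-a)^e-u^e=-b$. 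The key point you are missing is that $u=1$ is itself the third solution of this fixed difference equation---no further involution on the C2 solution set is needed. Once you see this, the contradiction with APN is immediate and the proof closes via Theorem~\ref{thm-terncmain}.
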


\begin{proof}
Let $x^e$ be APN over $\gf(q)$, where $q=3^m$. It then follows from Lemma \ref{lem-lemPNAPN} 
that $\gcd(e, 3^m-1)=2$.  By Lemma \ref{lem-zero}, the dimension of this code is equal to $3^m-1-2m$.  

We now prove that Condition C2 of Theorem \ref{thm-terncmain} is satisfied. To this end, we need 
to prove that there does not exist two distinct elements $y$ and $z$ in $\gf(q) \setminus \{0,1\}$ such that 
\begin{eqnarray}\label{eqn-apntc1}
\left\{ \begin{array}{l} 
           1+y+z=0 \\
           1+y^e+z^e=0. 
           \end{array}
\right.            
\end{eqnarray}

Suppose on the contrary that (\ref{eqn-apntc1}) has such a solution $(y, z)$. Then adding $1$ to 
both sides of both equations in (\ref{eqn-apntc1}) yields 
\begin{eqnarray}\label{eqn-apntc2}
\left\{ \begin{array}{l} 
           y-1=1-z \\
           y^e-1=1-z^e. 
           \end{array}
\right.            
\end{eqnarray} 

Define $a=1-y \ne 0$ and $b=1-y^e$. It then follows from (\ref{eqn-apntc2}) that 
\begin{eqnarray}\label{eqn-apntc3}
\left\{ \begin{array}{l} 
           (1-a)^e-1^e=-b \\
           (z-a)^e-z^e=-b. 
           \end{array}
\right.            
\end{eqnarray} 

Adding $z$ and $z^e$ to the first and second equation of (\ref{eqn-apntc1}) gives 
\begin{eqnarray}\label{eqn-apntc4}
\left\{ \begin{array}{l} 
           1-z=z-y=-a \\
           1^e-z^e=z^e-y^e=-b. 
           \end{array}
\right.            
\end{eqnarray} 
It then follows that 
\begin{eqnarray}\label{eqn-apntc5}
(y-a)^e-y^e=-b. 
\end{eqnarray} 

Hence the equation $(x-a)^e-x^e=-b$ has three distinct solutions $1, y$ and $z$. 
This is contrary to the assumption that $x^e$ is APN. Note that $(x-a)^e-x^e=-b$ 
has at most two solutions $x \in \gf(q)$ for any $(a, b) \in \gf(q)^* \times \gf(q)$ 
according to the definition of APN functions. This completes the proof of 
Condition C2 in Theorem \ref{thm-terncmain}. 

Finally we prove that Condition C3 is also satisfied. Suppose on the contrary that 
the equation $(x+1)^e-x^e-1=0$ has a solution $x \in \gf(q)^*$. Then $x \ne \pm 1$. 
Whence $x^2 \ne 1$ and thus $x \ne x^{-1}$. Dividing both sides of the equation 
$(x+1)^e-x^e-1=0$ with $x^e$ yields $(x^{-1}+1)^e-x^{-e}-1=0$. Therefore $x^{-1}$ 
is also a solution 
of the equation $(x+1)^e-x^e-1=0$. So this equation has three distinct solutions 
$0, x$ and $x^{-1}$.  This is contrary to the assumption that $x^e$ is APN over $\gf(q)$. 
Hence Condition C3 in Theorem \ref{thm-terncmain} is indeed met. 

The desired conclusions of this theorem then follow from Theorem \ref{thm-terncmain}. 
\end{proof}

The following is a summary of known APN monomials $x^e$ over $\gf(3^m)$. Each of them gives a 
class of optimal ternary cyclic codes with parameters $[3^m-1, 3^m-1-2m, 4]$.   
\begin{itemize} 
\item $e=3^{m-1}-1$, $m$ is odd. 
\item $e=3^{(m+1)/2}-1$, $m$ is odd. 
\item $e=\frac{3^m-3}{2}$, $m \ge 5$ and $m$ is odd (\cite{HRS}).  
\item $e=\frac{3^m+1}{4} + \frac{3^m-1}{2}$, $m \ge 3$ and $m$ is odd (\cite{HRS}).  
\item $e=\left(3^{(m+1)/4}-1\right)\left(3^{(m+1)/2}+1\right)$, $m \equiv 3 \pmod{4}$ \cite{Zha}. 
\item The exponent $e$ is defined by 
 \begin{eqnarray*} 
e=\left\{ \begin{array}{ll} 
               \frac{3^{(m+1)/2}-1}{2} & \mbox{if } m \equiv 3 \pmod{4} \\
               \frac{3^{(m+1)/2}-1}{2} + \frac{3^m-1}{2} & \mbox{if } m \equiv 1 \pmod{4}.                 
               \end{array} 
\right.                
\end{eqnarray*}   
\item The exponent $e$ is defined by  
\begin{eqnarray*} 
e=\left\{ \begin{array}{ll} 
               \frac{3^{m+1}-1}{8} & \mbox{if } m \equiv 3 \pmod{4} \\
               \frac{3^{m+1}-1}{8} + \frac{3^m-1}{2} & \mbox{if } m \equiv 1 \pmod{4}.                 
               \end{array} 
\right.                
\end{eqnarray*}   
\end{itemize}   

There are other APN monomials over $\gf(p^m)$ for $p \ge 5$. The reader is referred to 
\cite{HRS, Zha} for details. 

\begin{example} 
Let $m=3$ and let $e=3^{(m+1)/2}-1=8$. Let $\alpha$ be the generator of $\gf(3^m)^*$ with 
$\alpha^3 +2\alpha +1=0$. Then $\C_{(1,e)}$ is a ternary cyclic code with parameters $[26, 20, 4]$ 
and generator polynomial $x^6 + 2x^5 + x^4 + x^3 + 2$. 

The dual of $\C_{(1,e)}$ is a ternary cyclic code with parameters $[26, 6, 15]$ and weight enumerator 
$$ 
1+312x^{15}+260x^{18}+156x^{21}. 
$$ 
This code is also optimal, while the optimal ternary cyclic code with the same parameters in the Database 
is not known to be cyclic. 
\end{example}

\section{The optimal cyclic codes defined by other monomials $x^e$ over $\gf(3^m)$} 

In the previous sections, optimal ternary cyclic codes from planar and APN monomials over 
$\gf(3^m)$ were presented. In this section, we construct several classes of  optimal 
ternary cyclic codes with parameters $[3^m-1, 3^m-1-2m, 4]$ using  monomials 
over $\gf(3^m)$ that are neither planar nor APN. 

We prove first the following theorem. 

\begin{theorem}
Let $e=(3^h-1)/2$, where $2 \le h \le m-1$. The ternary cyclic code $\C_{(1,e)}$ has parameters 
$[3^m-1, 3^m-1-2m, 4]$ if and only if 
\begin{itemize}
\item[(a)] $m$ is odd; 
\item[(b)] $h$ is even; 
\item[(c)] $\gcd(h,m)=1$; and 
\item[(d)] $\gcd(h-1, m)=1$. 
\end{itemize} 
\end{theorem}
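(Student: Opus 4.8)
The plan is to apply Theorem~\ref{thm-terncmain}, so the task reduces to translating Conditions C1, C2, C3 into statements about the exponent $e = (3^h-1)/2$ and deciding exactly when all three hold. First I would dispose of C1 and the dimension condition. Since $3^h - 1 \equiv 0 \pmod 4$ precisely when $h$ is even (as $3^h \equiv 1 \pmod 4$ iff $h$ even), $e$ is even iff $h$ is even; this gives condition~(b). For the dimension to be $3^m-1-2m$ I need $\ell_e = m$ and $e \notin C_1$; by Lemma~\ref{lem-zero} it suffices to compute $\gcd(e, 3^m-1)$. Using $\gcd(3^h-1, 3^m-1) = 3^{\gcd(h,m)}-1$ and tracking the factor of $2$, $\gcd((3^h-1)/2, 3^m-1) = 2$ exactly when $\gcd(h,m)=1$ and the parities work out — this is where conditions~(a) and (c) enter, analogous to the manipulations in the proof of Theorem~\ref{thm-inverseCD05}. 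The cleanest route is to write $2e = 3^h - 1$ and reduce $\gcd$-of-$(x^{2e}-1)$-type computations exactly as in the proof of Theorem~\ref{thm-inverseCD052}.

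Next I would handle C3: the equation $(x+1)^e - x^e - 1 = 0$ should have only $x=0$ as a solution in $\gf(q)$. The key algebraic identity to exploit is that multiplying the C2 and C3 polynomials gives $(x+1)^{2e} - (x^e+1)^2$, and with $2e = 3^h-1$ this equals $(x+1)^{3^h-1} - (x^{(3^h-1)/2}+1)^2$; I would expand this as a power of $x$ times a squared Frobenius-linear factor, mimicking the computation displayed in the proof of Theorem~\ref{thm-inverseCD05} where the product came out to $x(x^{(3^h-1)/2}-1)^2$. Clearing denominators (i.e.\ working with $x^{3^h-1}-1$ and $x^{3^h-1}+1$ separately, as in Theorem~\ref{thm-inverseCD052}) should let me pin down the solution set of C2 and C3 simultaneously in terms of $\gcd(3^h-1, 3^m-1)$-divisibility, reducing both conditions to requiring that certain subgroups of $\gf(q)^*$ are trivial, which in turn translates to $\gcd$ conditions on $h$, $h-1$, and $m$.

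Condition C2 is where I expect condition~(d), $\gcd(h-1,m)=1$, to surface, and I anticipate this is the main obstacle. Writing $e = (3^h-1)/2$ and using $3 \cdot e = (3^{h+1}-3)/2 = (3^{h+1}-1)/2 - 1$, or alternatively $3^h x^e = x^{3e}$ acting as a near-Frobenius, one can try to show that a solution $x \neq 1$ of $(x+1)^e + x^e + 1 = 0$ forces, after raising to the third power and manipulating, a relation of the form $x^{3^{h-1}-1} = $ (something), whose solvability is governed by $\gcd(3^{h-1}-1, 3^m-1) = 3^{\gcd(h-1,m)}-1$. So the presence of the exponent $h-1$ should trace back to applying Frobenius $h$ times and comparing with a single application. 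I would organize the C2 argument by first substituting $x \to x^3$ (or analyzing $x, x^3, x^9, \ldots$) to linearize the $x^e$ terms, then extracting a pure power equation in $x$; the hard part is bookkeeping the interaction of the two constraints $1 + x^3 + (\text{image}) = 0$ from the original and cubed equations, and showing that nontrivial solutions exist precisely when $\gcd(h-1,m) > 1$.

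For the sufficiency direction I would run the same computations in reverse: assuming (a)--(d), verify $\gcd(e, 3^m-1) = 2$ so the dimension is correct and $e \notin C_1$ by Lemma~\ref{lem-zero}, then show the factored forms of the C2 and C3 polynomials have no roots outside $\{0, \pm 1\}$ (equivalently outside $\{0,1\}$ after the $y \notin \{0,1\}$ normalization from the proof of Theorem~\ref{thm-terncmain}), using that each relevant $\gcd$ collapses to $2$ or $3^1 - 1$; Theorem~\ref{thm-terncmain} then delivers $d = 4$ and hence optimality. I would double-check the small-case boundary $h = 2$ (where (b), (c) force $m$ odd and coprime to $2$, automatic, and (d) becomes $\gcd(1,m) = 1$, automatic) against the Database or a direct computation to make sure no degenerate solution has been overlooked.
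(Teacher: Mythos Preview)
Your overall strategy---reduce to Theorem~\ref{thm-terncmain}, read off (b) from C1 via the parity of $h$, and attack C2 and C3 simultaneously through the product $(x+1)^{2e}-(x^e+1)^2$---is exactly the paper's route. Where your plan goes off track is the predicted shape of that product. With $2e=3^h-1$ it is \emph{not} a perfect square as in Theorem~\ref{thm-inverseCD05}; the paper computes (after multiplying through by $x+1$)
\[
(x+1)^{2e}-(x^e+1)^2 \;=\; -\,x\,(x+1)^{-1}\Bigl(x^{(3^h-3)/2}-1\Bigr)\Bigl(x^{(3^h-1)/2}-1\Bigr),
\]
two genuinely distinct factors. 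The factor $x^{(3^h-1)/2}-1$ has no root outside $\{0,\pm1\}$ iff $\gcd((3^h-1)/2,\,3^m-1)\in\{1,2\}$, which for $h$ even gives $3^{\gcd(h,m)}-1\le 2$, i.e.\ condition~(c). The factor $x^{(3^h-3)/2}-1$, since $(3^h-3)/2=3\cdot(3^{h-1}-1)/2$, leads to $\gcd((3^{h-1}-1)/2,\,3^m-1)\in\{1,2\}$, which for $h-1$ odd gives $(3^{\gcd(h-1,m)}-1)/2\le 2$, i.e.\ condition~(d). So condition~(d) already falls out of the product factorization and there is no need for the separate Frobenius/cubing attack on C2 that you sketch; that detour is the only real gap in your plan.

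Two minor corrections to the bookkeeping. First, condition~(a) is not an independent input coming from the dimension: once (b) and (c) hold, $h$ even and $\gcd(h,m)=1$ force $m$ odd. Second, the paper handles the dimension last rather than first: after (a)--(d) are established one checks $\gcd(e,3^m-1)=2$, hence $\ell_e=m$ by Lemma~\ref{lem-zero}, and observes $e=1+3+\cdots+3^{h-1}\notin C_1$.
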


\begin{proof}
The code $\C_{(1,e)}$ does not have a codeword of Hamming weight two if and only if $e$ is even. 
Clearly $e$ is even if and only if $h$ is even. 

Obviously, Conditions C2 and C3 in Theorem \ref{thm-terncmain} are simultaneously satisfied if 
and only if  
the equation $(x+1)^{2e} - (x^e+1)^2=0$ does not have a solution $x \in \gf(q)\setminus 
\{0, \pm 1\}$. 

Let $x \in \gf(q)\setminus 
\{0, \pm 1\}$. 
Note that 
\begin{eqnarray*} 
\lefteqn{(x+1)^{2e} - (x^e+1)^2} \\  
&=& (x+1)^{3^h-1} -\left( x^{3^h-1} -  x^{(3^h-1)/2} +1 \right) \\ 
&=& \frac{(x+1)^{3^h} -(x+1)( x^{3^h-1} -  x^{(3^h-1)/2} +1)}{x+1} \\
&=& -x(x+1)^{-1} \left(x^{(3^h-3)/2}-1  \right) \left( x^{(3^h-1)/2} - 1 \right). 
\end{eqnarray*}
The equation $(x+1)^{2e} - (x^e+1)^2=0$ does not have a solution $x \in \gf(q)\setminus 
\{0, \pm 1\}$ if and only if 
\begin{eqnarray}\label{eqn-e23a}
\gcd\left(\frac{3^{h-1}-1}{2}, 3^m-1\right) \in \{1, 2\} 
\end{eqnarray}
and 
\begin{eqnarray}\label{eqn-e23b}
\gcd\left(\frac{3^{h}-1}{2}, 3^m-1\right) \in \{1, 2\}.  
\end{eqnarray}

It is easily seen that for any $i \ge 1$ 
\begin{eqnarray*}
\gcd\left(\frac{3^i-1}{2}, 3^m-1\right)= \left\{ 
\begin{array}{l}
\left( 3^{^{\gcd(i,m)}} -1 \right)/2  \mbox{ if $i$ is odd}  \\
3^{^{\gcd(i,m)}} -1  \mbox{ if $i$ is even.} 
\end{array} 
\right. 
\end{eqnarray*}

Conditions (a), (b), (\ref{eqn-e23a}) and (\ref{eqn-e23b}) together are equivalent to Conditions (a), (b), (c) and (d) 
together. 

When all the conditions of this theorem are met, we have 
$$ 
\gcd(e, 3^m-1)=2. 
$$
It then follows from Lemma \ref{lem-zero} that $\ell_e=m$. 

Note that $e=(3^h-1)/2=3^{h-1}+3^{h-2}+\cdots + 3 + 1$. Hence $e \not\equiv 3^i \pmod{3^m-1}$ 
for any $1 \le i \le m-1$. Thus $e \not\in C_1$. 

The desired conclusions in this theorem then follow from Theorem \ref{thm-terncmain}.      

\end{proof}

\begin{example} 
Let $(m, h)=(5, 2)$ and let $e=(3^h-1)/2=4$. Let $\alpha$ be the generator of $\gf(3^m)^*$ with 
$\alpha^5 +2 \alpha +1=0$. Then $\C_{(1,e)}$ is a ternary cyclic code with parameters $[242, 232, 4]$ 
and generator polynomial $x^{10} + 2x^9 + 2x^7 + x^5 + 2x^4 + x^3 + x^2 + 2x + 2$. 

The dual of $\C_{(1,e)}$ is a ternary cyclic code with parameters $[242, 10, 153]$ and weight enumerator 
$$ 
1+21780x^{153}+19844x^{162} +17424x^{171}. 
$$ 
This code has the same parameters as the best known ternary linear code with parameters $[242, 10, 153]$ 
in the Database. 
The upper bound on the minimum distance $d$ of any ternary linear code of length 242 and dimension 10  
is 155. 
\end{example} 

\begin{lemma}\label{lem-e321} 
Let $e=2(3^h+1)$, where $0 \le h \le m-1$. When $m$ is odd, $\ell_e=m$. When $m$ is even, 
\begin{eqnarray}
\ell_e = \left\{ \begin{array}{ll} 
                        \frac{m}{2} & \mbox{ if } h=\frac{m}{2} \\
                         m & \mbox{ if } h \ne \frac{m}{2}. 
                         \end{array}
                         \right.                          
\end{eqnarray}
\end{lemma}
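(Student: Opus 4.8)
The plan is to reduce the whole statement to the computation of $\gcd(2(3^h+1),3^m-1)$, after which Lemma~\ref{lem-e31} and the argument used for Lemma~\ref{lem-zero} finish everything. The basic tool is that $\ell_e$ is the least positive integer with $(3^m-1)\mid e(3^{\ell_e}-1)$, which, after dividing by $\gcd(e,3^m-1)$, says exactly that $(3^m-1)/\gcd(e,3^m-1)$ divides $3^{\ell_e}-1$; hence $\ell_e=\ord_{(3^m-1)/\gcd(e,3^m-1)}(3)$ depends only on $\gcd(e,3^m-1)$. So the first step is to compare $\gcd(2(3^h+1),3^m-1)$ with $\gcd(3^h+1,3^m-1)$. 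A short $2$-adic valuation count does this: one always has $v_2(3^h+1)\le 2$, while $v_2(3^m-1)=1$ when $m$ is odd and $v_2(3^m-1)\ge 3$ when $m$ is even. Consequently $\gcd(2(3^h+1),3^m-1)=\gcd(3^h+1,3^m-1)$ when $m$ is odd, and $\gcd(2(3^h+1),3^m-1)=2\gcd(3^h+1,3^m-1)$ when $m$ is even.

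When $m$ is odd this finishes the proof immediately: the two gcd's coincide, so $\ell_{2(3^h+1)}=\ell_{3^h+1}=m$ by Lemma~\ref{lem-e31}.

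When $m$ is even I split on whether $h=m/2$. If $h=m/2$, then $\gcd(3^h+1,3^m-1)=3^{m/2}+1$, so $\gcd(2(3^h+1),3^m-1)=2(3^{m/2}+1)$ and $(3^m-1)/\gcd(2(3^h+1),3^m-1)=(3^{m/2}-1)/2$; the argument in the proof of Lemma~\ref{lem-zero}, applied with $m/2$ in place of $m$, shows $\ord_{(3^{m/2}-1)/2}(3)=m/2$, so $\ell_{2(3^h+1)}=m/2$. If $h\ne m/2$, write $D=\gcd(3^h+1,3^m-1)$; Lemma~\ref{lem-e31} gives $\ell_{3^h+1}=m$, that is $\ord_{(3^m-1)/D}(3)=m$, and from the previous paragraph $\ell_{2(3^h+1)}=\ord_{(3^m-1)/(2D)}(3)$, which divides $m$ since $(3^m-1)/(2D)$ divides $(3^m-1)/D$. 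If this order were $\ell<m$, then $\ell\le m/2$, so $(3^m-1)/(2D)\mid 3^\ell-1\le 3^{m/2}-1$, forcing $2D\ge(3^m-1)/(3^{m/2}-1)=3^{m/2}+1$. But the known value of $\gcd(3^h+1,3^m-1)$ gives $D\in\{2,\ 3^{\gcd(h,m)}+1\}$, and for $m\ge 4$ the inequality $2D\ge 3^{m/2}+1$ is impossible when $D=2$ and forces $\gcd(h,m)\ge m/2$ when $D=3^{\gcd(h,m)}+1$ (an elementary estimate using $3^{m/2}>3$). Since $\gcd(h,m)$ is a proper divisor of $m$, the only possibility is $\gcd(h,m)=m/2$, which would force $h=m/2$, a contradiction. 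Hence $\ell_{2(3^h+1)}=m$.

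The main obstacle is exactly the last paragraph: ruling out a spuriously short period for $e=2(3^h+1)$ in the case $m$ even, $h\ne m/2$. The subtlety is that passing from $3^h+1$ to $2(3^h+1)$ halves the modulus $(3^m-1)/\gcd(\,\cdot\,,3^m-1)$, which a priori might divide $3^\ell-1$ for some $\ell<m$; the estimate $(3^m-1)/(2D)>3^{m/2}-1$ is what rules this out, and it rests on the structural fact that $\gcd(3^h+1,3^m-1)$ is either $2$ or $3^{\gcd(h,m)}+1$ with $\gcd(h,m)$ a proper divisor of $m$ different from $m/2$. This estimate needs $m\ge 4$; the small case $m=2$ should be checked directly.
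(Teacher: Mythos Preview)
Your proof is correct and follows the same strategy the paper indicates (the paper omits the argument as ``similar to that of Lemma~\ref{lem-e31}''): compute $\gcd(e,3^m-1)$ and then use a size estimate to rule out $\ell_e<m$ when $h\ne m/2$, handling $h=m/2$ by direct computation. Your $2$-adic comparison reducing $\gcd(2(3^h+1),3^m-1)$ to $\gcd(3^h+1,3^m-1)$ is a clean way to piggyback on Lemma~\ref{lem-e31}, and you are right to flag the boundary case $m=2$, $h=0$ (there $e=4$ and $\ell_4=1$, so the lemma as stated actually fails at that single point).
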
 

\begin{proof}
The proof is similar to that of Lemma \ref{lem-e31} and is omitted. 
\end{proof} 

\begin{theorem}\label{thm-e213}
Let $e=2(1+3^h)$, where $0 \le h \le m-1$, and let $m$ be even. 
Then the ternary cyclic code $\C_{(1,e)}$ has parameters $[3^m-1, k, 3]$, 
where 
\begin{eqnarray*}
k=\left\{ \begin{array}{ll}
               3^m-1-2m     & \mbox{ if } h \ne m/2 \\
               3^m-1-3m/2 & \mbox{ if } h = m/2. 
\end{array}
\right. 
\end{eqnarray*}
\end{theorem}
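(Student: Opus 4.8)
The plan is to apply Theorem~\ref{thm-terncmain} together with Lemma~\ref{lem-e321}. Since $e=2(1+3^h)$ is even, Condition~C1 of Theorem~\ref{thm-terncmain} holds, so by the argument in the proof of that theorem $\C_{(1,e)}$ has no codeword of Hamming weight~$2$; as the minimum distance is never $1$, this already gives $d\ge 3$. Moreover $C_1=\{1,3,\dots,3^{m-1}\}$ consists of odd integers while $e$ is even, so $e\notin C_1$, the polynomials $m_\alpha(x)$ and $m_{\alpha^e}(x)$ are distinct, and $\dim\C_{(1,e)}=3^m-1-(m+\ell_e)$; the value of $\ell_e$, hence of $k$, is supplied by Lemma~\ref{lem-e321} ($\ell_e=m/2$ if $h=m/2$, and $\ell_e=m$ otherwise). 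So everything reduces to proving $d=3$, i.e.\ to producing a single codeword of Hamming weight~$3$; by the proof of Theorem~\ref{thm-terncmain} it suffices to exhibit one element $x\in\gf(3^m)\setminus\{0,1\}$ with $(x+1)^e-x^e-1=0$, or one such $x$ with $(x+1)^e+x^e+1=0$.

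My candidate is $x=i$, a square root of $-1$. Since $m$ is even, $\gf(9)\subseteq\gf(3^m)$, so such an $i$ exists, and $i\notin\{0,1,-1\}$ because $i^2=-1$. I would evaluate the identity $(x+1)^{2e}-(x^e+1)^2=\bigl((x+1)^e-x^e-1\bigr)\bigl((x+1)^e+x^e+1\bigr)$ at $x=i$. The computation is short: $(i+1)^2=2i=-i$, hence $(i+1)^{2e}=\bigl((i+1)^2\bigr)^e=(-i)^e=i^e$ since $e$ is even; and $i^e=(i^2)^{3^h+1}=(-1)^{3^h+1}=1$ since $3^h+1$ is even. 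Therefore $(i+1)^{2e}=1$ and $(i^e+1)^2=2^2=1$, so $(i+1)^{2e}-(i^e+1)^2=0$, and one of the two factors on the right vanishes at $x=i$: either $(i+1)^e-i^e-1=0$ or $(i+1)^e+i^e+1=0$. In either case the construction in the proof of Theorem~\ref{thm-terncmain} (Case~B or Case~A, according to which equation holds), run with this value of $x$, produces a codeword of $\C_{(1,e)}$ of Hamming weight~$3$; the only point to check is that the three field elements $1,y_1,y_2$ it yields are pairwise distinct and nonzero, and this is immediate from $i\notin\{0,1,-1\}$. Hence $d\le 3$, and with $d\ge 3$ we conclude $d=3$.

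Finally, substituting the two values of $\ell_e$ from Lemma~\ref{lem-e321} into $k=3^m-1-(m+\ell_e)$ yields $k=3^m-1-2m$ for $h\ne m/2$ and $k=3^m-1-3m/2$ for $h=m/2$, which is exactly the stated claim.

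I expect the single real obstacle to have been finding the right element, $x=i=\sqrt{-1}$; once it is in hand, the rest is elementary and uses only the reductions already recorded in Theorem~\ref{thm-terncmain}, the dimension count, and Lemma~\ref{lem-e321} (whose proof runs parallel to that of Lemma~\ref{lem-e31}). The one remaining delicate point is turning a root of $(x+1)^{2e}-(x^e+1)^2$ lying outside $\{0,\pm1\}$ into an actual weight-$3$ codeword, but this is exactly what the Case~A/Case~B discussion inside the proof of Theorem~\ref{thm-terncmain} provides.
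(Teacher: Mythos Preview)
Your proof is correct and follows essentially the same approach as the paper: both arguments pick the element $x$ with $x^2=-1$ (the paper writes it as $x=\alpha^{(3^m-1)/4}$) and use it to violate one of Conditions~C2/C3, with the dimension coming from Lemma~\ref{lem-e321}. The only difference is cosmetic: the paper splits into the cases $h$ even and $h$ odd and shows directly that C3 (resp.\ C2) fails in each case, whereas you compute the product $(x+1)^{2e}-(x^e+1)^2$ at $x=i$ and conclude that \emph{one} of the two factors vanishes without saying which; this neatly avoids the parity-of-$h$ case split at the cost of not identifying which condition actually fails.
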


\begin{proof}
It is easily seen that $e \not\in C_1$. 
The conclusions on the dimension then follow from Lemma \ref{lem-e321}. Since $e$ is even, 
the minimum distance $d \ge 3$. We now prove that $d=3$. To this end, we prove that 
Condition C2 or C3 in Theorem \ref{thm-terncmain} is not satisfied. 

Since $m$ is even, $(3^m-1)/2$ is even. Define $x=\alpha^{(3^m-1)/4}$, where $\alpha$ 
is a generator of $\gf(3^m)^*$. Then $x^2=\alpha^{(3^m-1)/2}=-1.$  

When $h$ is even, we have 
\begin{eqnarray*}
\lefteqn{(x+1)^{2(1+3^h)}-x^{2(1+3^h)}-1} \\  
&=& (x^2-x+1)^{1+3^h}-(x^2)^{1+3^h}-1 \\ 
&=& (-x)^{1+3^h}-(-1)^{1+3^h}-1 \\ 
&=& 0. 
\end{eqnarray*}
In this case, Condition C3  in Theorem \ref{thm-terncmain} is not met, and hence $d=3$. 

When $h$ is odd, we have 
\begin{eqnarray*}
(x+1)^{2(1+3^h)}+x^{2(1+3^h)}+1 
= (-1)^{(1+3^h)/2} + 2  
= 0. 
\end{eqnarray*}
In this case, Condition C3  in Theorem \ref{thm-terncmain} is not met, and hence $d=3$. 
 
\end{proof}

The code $\C_{(1,e)}$ of Theorem \ref{thm-e213} is almost optimal when $h \ne m/2$, 
and is optimal when $(m,h)=(2,1)$ and $(m,h)=(4,2)$. 

\begin{conjecture}\label{conj-e213}
Let $e=2(1+3^h)$, where $0 \le h \le m-1$, and let $m$ be odd. Does 
the ternary cyclic code $\C_{(1,e)}$ have parameters $[3^m-1, 3^m-1-2m, 4]$?   
\end{conjecture}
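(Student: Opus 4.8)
The plan is to verify the three conditions of Theorem~\ref{thm-terncmain} together with the dimension formula. Condition~C1 is immediate since $e=2(1+3^h)$ is even. For the dimension, one checks that $e\notin C_1$: for $h\ge 1$ the base-$3$ expansion of $e=2+2\cdot 3^h$ has two nonzero digits, and for $h=0$ one has $e=4$, so in no case is $e$ congruent modulo $3^m-1$ to a power of $3$; together with $\ell_e=m$, which is exactly Lemma~\ref{lem-e321} for $m$ odd, this gives $\dim\C_{(1,e)}=3^m-1-2m$. It is also worth recording that $\gcd(e,3^m-1)=2$ when $m$ is odd: here $3^m-1=2k$ with $k$ odd, and $\gcd(3^h+1,3^m-1)=2$ because $m/\gcd(m,h)$ is odd (Lemma~\ref{lem-e31}), whence $\gcd(3^h+1,k)=1$ and $\gcd(2(3^h+1),2k)=2$; in particular $x\mapsto x^e$ is two-to-one on $\gf(3^m)^*$.

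I would then reduce Conditions~C2 and~C3 to a single nonexistence statement, as in the proofs of Theorems~\ref{thm-inverseCD05} and~\ref{thm-apntcodes}. Since $x=0$ is not a root of $(x+1)^e+x^e+1$, $x=1$ is not a root of $(x+1)^e-x^e-1$, and $x=-1$ is a root of neither, Conditions~C2 and~C3 hold simultaneously if and only if
$$\bigl((x+1)^e+x^e+1\bigr)\bigl((x+1)^e-x^e-1\bigr)=(x+1)^{2e}-(x^e+1)^2$$
has no root in $\gf(3^m)\setminus\{0,\pm 1\}$. Here the small characteristic is used crucially: writing $w=x-1$ and using $(x+1)^{3^h}=x^{3^h}+1$, one gets $(x+1)^e=C^2$ and $x^e=D^2$ with $C=(w^{3^h}-1)(w-1)$ and $D=(w^{3^h}+1)(w+1)$, so that $C-D=w^{3^h}+w=:L(w)$ and $C+D=2(w^{3^h+1}+1)=:2P$, where $L$ is the $\gf(3)$-linear map $w\mapsto w^{3^h}+w$. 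Hence, with $Q:=L(w)$,
$$(x+1)^{2e}-(x^e+1)^2=(2PQ-1)\,(2P^2-Q^2+1),$$
where $2PQ-1=(x+1)^e-x^e-1$ and $2P^2-Q^2+1=(x+1)^e+x^e+1$. Thus Condition~C3 becomes: $2PQ=1$ has $w=-1$ as its only root in $\gf(3^m)$; and Condition~C2 becomes: $2P^2-Q^2+1=0$ has $w=0$ as its only root.

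To attack these I would exploit the two structural features of the odd-$m$ situation. First, $L(w)=w^{3^h}+w$ is a \emph{bijection} of $\gf(3^m)$: a nonzero element of its kernel would give $w^{3^h-1}=-1$, but $\gcd(3^h-1,3^m-1)=3^{\gcd(h,m)}-1$ and $(3^m-1)/(3^{\gcd(h,m)}-1)$, being a sum of $m/\gcd(h,m)$ odd terms, is odd, so $-1$ is not a $(3^h-1)$-th power. Second, $w^{3^h}=Q-w$, hence $P=wQ-w^2+1$, which turns Condition~C3 into the relation $wQ^2-w^2Q+Q+1=0$ and Condition~C2 into $2(wQ-w^2+1)^2=Q^2-1$, each now a polynomial relation between $w$ and the bijective quantity $Q=L(w)$. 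One natural way forward is to re-parametrize by $q_0=L(w)$ and write $w=L^{-1}(q_0)$ as an explicit linearized polynomial in $q_0$, hoping the relations collapse to an identity; another is a direct elimination between $w$ and $w^{3^h}$, using $\gcd(3^h+1,3^m-1)=2$ to control $w^{3^h+1}$.

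The step I expect to be the main obstacle is precisely this final count: showing that the two curves $2PQ=1$ and $2P^2=Q^2-1$ carry no $\gf(3^m)$-point beyond the forced ones, \emph{uniformly in $h$ and in every odd $m$}. In the planar and APN cases the polynomial $(x+1)^{2e}-(x^e+1)^2$ splits as $x$ times a perfect square of a binomial $x^d-1$, so everything reduces to the transparent gcd condition $\gcd(d,3^m-1)\in\{1,2\}$; here, by contrast, the C2- and C3-factors are irreducible-looking of degree on the order of $3^h$, a Weil-type estimate for their number of $\gf(3^m)$-points is far too weak to rule out extra solutions for small $m$, and no change of variables seems to linearize them. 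A proof would therefore need a genuinely new idea at this point, and one should stay alert to the possibility that the answer is negative for some pair $(m,h)$ — in which case the reduction above at least pins down exactly the two equations whose solubility must be decided.
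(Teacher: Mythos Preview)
The statement you are attempting to prove is not a theorem in the paper but an \emph{open problem} (Open Problem~\ref{conj-e213}). The paper offers no proof whatsoever; it only reports that the answer is affirmative for $3\le m\le 13$ by Magma computation. There is therefore no argument in the paper to compare your proposal against.

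Your reduction is correct and carefully executed: the dimension count via Lemma~\ref{lem-e321}, the verification that $e\notin C_1$, the identity $\gcd(e,3^m-1)=2$ for odd $m$, the substitution $w=x-1$ giving $(x+1)^e=C^2$ and $x^e=D^2$ with $C-D=L(w)=w^{3^h}+w$ linearized and bijective, and the factorization of the product into $(2PQ-1)(2P^2-Q^2+1)$ all check out. You have also correctly diagnosed the obstruction: in the planar cases $e=3^h+1$ and $e=(3^h+1)/2$ treated in Theorems~\ref{thm-inverseCD05} and~\ref{thm-inverseCD052}, the product $(x+1)^{2e}-(x^e+1)^2$ collapses to $x$ times a perfect square $(x^d-1)^2$ and everything reduces to a transparent gcd condition, whereas here no such collapse is available and one is left with two curves of degree on the order of $3^h$ whose $\gf(3^m)$-points must be controlled uniformly in $h$ and $m$.

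That you end without a completed proof is therefore not a defect of your write-up but an accurate reflection of the problem's status: as far as the paper is concerned, this remains open, and your final paragraph is right to flag that a genuinely new idea would be needed.
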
 

When $3 \le m \le 13$, the answer to this question is positive and confirmed by Magma. 

\begin{theorem}
Let $e=3^h-1$, where $1 \le h \le m-1$. The ternary cyclic code $\C_{(1,e)}$ has parameters 
$[3^m-1, 3^m-1-2m, 4]$ if and only if 
\begin{itemize}
\item[(A)] $\gcd(h,m)=1$; and 
\item[(B)] $\gcd(3^h-2, 3^m-1)=1$. 
\end{itemize} 
\end{theorem}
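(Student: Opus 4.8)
The plan is to deduce the theorem from Theorem~\ref{thm-terncmain} by converting each of its Conditions C1, C2, C3 into a statement about greatest common divisors. Two facts are immediate. Since $3^h$ is odd, $e=3^h-1$ is even, so Condition~C1 holds automatically. Moreover $3^ie$ is even for every $i$ while $3^m-1$ is even, so $3^ie\not\equiv 1\pmod{3^m-1}$; hence $e\notin C_1$, which makes $m_{\alpha}(x)$ and $m_{\alpha^e}(x)$ coprime and gives $\dim\C_{(1,e)}=3^m-1-m-\ell_e$, so that the dimension equals $3^m-1-2m$ exactly when $\ell_e=m$.

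Next I would treat Condition~C2. Multiplying $(x+1)^e+x^e+1$ by $x(x+1)$, using the Frobenius identity $(x+1)^{3^h}=x^{3^h}+1$ in characteristic $3$, and simplifying over $\gf(3)$, one should reach the polynomial identity
$$
x(x+1)\bigl[(x+1)^e+x^e+1\bigr]=2x(x-1)\bigl(x^{3^h-1}-1\bigr).
$$
A direct substitution shows that neither $x=0$ nor $x=-1$ is a root of $(x+1)^e+x^e+1$, so the solutions of the C2-equation in $\gf(q)^*$ are precisely the elements of the subgroup $\{x\in\gf(q)^*:x^{3^h-1}=1\}$ other than $-1$. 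This subgroup always contains both $1$ and $-1$ (as $3^h-1$ is even), so Condition~C2 holds if and only if the subgroup has exactly two elements, i.e.\ if and only if $\gcd(3^h-1,3^m-1)=3^{\gcd(h,m)}-1=2$, i.e.\ if and only if $\gcd(h,m)=1$ --- Condition~(A).

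For Condition~C3 the analogous manipulation gives
$$
x(x+1)\bigl[(x+1)^e-x^e-1\bigr]=-x^2\bigl(x^{3^h-2}+1\bigr).
$$
Here $x=0$ \emph{is} a root of the left-hand side (the permitted solution of the C3-equation), while $x=-1$ is checked not to be; hence Condition~C3 is equivalent to $x^{3^h-2}=-1$ having no root in $\gf(q)^*\setminus\{-1\}$. Since $3^h-2$ is odd, $-1$ always satisfies $x^{3^h-2}=-1$, and the number of roots of this equation in $\gf(q)^*$ equals $\gcd(3^h-2,3^m-1)$; therefore Condition~C3 holds if and only if $\gcd(3^h-2,3^m-1)=1$ --- Condition~(B).

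Finally I would assemble the two directions. If (A) and (B) hold, then $\gcd(e,3^m-1)=3^{\gcd(h,m)}-1=2$, so Lemma~\ref{lem-zero} gives $\ell_e=m$; combined with $e\notin C_1$ and the verifications of C1, C2, C3, Theorem~\ref{thm-terncmain} yields the parameters $[3^m-1,3^m-1-2m,4]$. Conversely, if $\C_{(1,e)}$ has those parameters then the dimension formula forces $\ell_e=m$, so Theorem~\ref{thm-terncmain} applies and delivers C1, C2, C3, whence (A) and (B). The step I expect to be most delicate is the bookkeeping in the two computations where one multiplies through by $x(x+1)$: obtaining the tidy $\gf(3)$-factorizations and, above all, correctly handling the spurious values $x\in\{0,-1\}$ that the multiplication introduces, since the final answer rests on the observations that $-1$ is always a root of $x^{3^h-2}+1$ --- so that it is ``$\gcd=1$'', not a divisibility condition involving $(3^m-1)/2$, that C3 requires --- and that $1$ and $-1$ are always the forced members of the C2 solution set.
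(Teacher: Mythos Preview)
Your approach is essentially the paper's: multiply the C2 and C3 equations through by $x+1$ (or $x(x+1)$) to obtain the factorizations $(1-x)(x^{3^h-1}-1)$ and $-x^2(x^{3^h-2}+1)$, then translate the root conditions into the gcd conditions (A) and (B). Two minor differences are worth noting. First, the paper establishes $\ell_e=m$ by a direct divisibility estimate on $\gcd\bigl(3^m-1,(3^j-1)(3^h-1)\bigr)$, whereas you invoke Lemma~\ref{lem-zero} after observing $\gcd(e,3^m-1)=2$; your route is shorter. Second, and more substantively, the paper's written proof only argues the ``if'' direction, while your equivalences C2$\Leftrightarrow$(A) and C3$\Leftrightarrow$(B), together with your observation that the dimension hypothesis forces $\ell_e=m$ so that Theorem~\ref{thm-terncmain} applies, supply the ``only if'' direction as well. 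Your bookkeeping on the spurious roots $x\in\{0,-1\}$ and on $-1$ always lying in both solution sets is correct.
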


\begin{proof}
Let Condtions (A) and (B) be met. We first prove that $\ell_e=m$. To this end, we prove 
that there is no $1 \le j \le m-1$ such that $3^m-1$ divides $(3^j-1)(3^h-1)$. Note that 
\begin{eqnarray*}
\lefteqn{\gcd(3^m-1, (3^j-1)(3^h-1))) } \\ 
&=& \frac{\gcd(3^m-1, 3^h-1)\gcd(3^m-1, 3^{^j}-1)}{\gcd(3^{^j}-1, 3^h-1)} \\
&=& 2\frac{3^{^{\gcd(m,j)}}-1}{3^{^{\gcd(j,h)}}-1} \\
&<& 3^m-1. 
\end{eqnarray*}
Hence $\ell_e=m$. Clearly, $e \not\in C_1$.  Thus the dimension of the code is equal to $3^m-1-2m$. 

It is now time to prove that the minimum distance $d=4$. Suppose on the contrary that the 
equation $(x+1)^e+x^e+1=0$ has a solution $x \in \gf(3^m)\setminus \{0, \pm 1\}$. Then 
we have 
$$ 
(x+1)^{3^h-1}+x^{3^h-1}+1=0. 
$$
Multiplying both sides of this equation with $x+1$ yields 
$$ 
x^{3^h-1}-x^{3^h}+x-1=(1-x)(x^{3^h-1}-1)=0. 
$$
Whence $x^{3^h-1}=1$. It then follows from Condition (A) that $x^2=1$. This is contrary to 
the assumption that $x \ne \pm 1$. Therefore Condition C2 in Theorem \ref{thm-terncmain}  
is met. 

Now suppose on the contrary that the equation $(x+1)^e-x^e-1=0$ has a solution 
$x \in \gf(3^m)\setminus \{0, \pm 1\}$. Then 
we have 
$$ 
(x+1)^{3^h-1}-x^{3^h-1}-1=0. 
$$
Multiplying both sides of this equation with $x(x+1)$ gives  
$$ 
x^{3^h}+x^2=0. 
$$
Whence $x^{3^h-2}+1=0$ and $x^{2(3^h-2)}=1$. It then follows from Condition (B) that $x^2=1$. 
This is contrary to the assumption that $x \ne \pm 1$. Therefore Condition C3 in Theorem 
\ref{thm-terncmain} is satisfied. 

Note that $e$ is even. It then follows from Theorem \ref{thm-terncmain} that $d=4$. 
\end{proof}

\begin{example} 
Let $(m, h)=(5, 2)$ and let $e=3^h-1=8$. Let $\alpha$ be the generator of $\gf(3^m)^*$ with 
$\alpha^5 +2 \alpha +1=0$. Then $\C_{(1,e)}$ is a ternary cyclic code with parameters $[242, 232, 4]$ 
and generator polynomial $x^{10} + 2x^9 + x^6 + x^5 + 2x^4 + x^2 + 2$. 

The dual of $\C_{(1,e)}$ is a ternary cyclic code with parameters $[242, 10, 147]$ and weight enumerator 
\begin{eqnarray*} 
1+2420x^{147}+4840x^{150}+2420x^{153}+7260x^{156}+ \\ 
9680x^{159}+  
10164x^{162} +  
9680x^{165}+2420x^{168}+ \\ 
7260x^{171}+2420x^{174}+484x^{186}. 
\end{eqnarray*} 
\end{example} 

Below we list a number of open problems on the ternary cyclic codes $\C_{(1,e)}$.  

\begin{conjecture} 
Let $e=2(3^{m-1}-1)$. Does the ternary cyclic code $\C_{(1,e)}$ have parameters 
$[3^m-1, 3^m-1-2m, 4]$ if $m\ge 5$ and $m$ is prime? 
\end{conjecture}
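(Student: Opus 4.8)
The plan is to apply Theorem~\ref{thm-terncmain}. Since this is an "Open Problem" statement (a conjecture about $e=2(3^{m-1}-1)$ for $m\ge 5$ prime), I would first try to verify the three conditions C1, C2, C3 directly, mimicking the structure of the proofs for $e=3^h-1$ and $e=(3^h-1)/2$ given earlier. Condition C1 is immediate: $e=2(3^{m-1}-1)$ is manifestly even. For the dimension one needs $e\notin C_1$ and $\ell_e=m$; the first is clear since $e$ is even while every element of $C_1$ is odd, and for $\ell_e=m$ I would compute $\gcd(e,3^m-1)$ and appeal to Lemma~\ref{lem-zero} (here $m$ prime is the key hypothesis making the gcd argument tractable, since $\gcd(3^{m-1}-1,3^m-1)=3^{\gcd(m-1,m)}-1=2$).

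The substance is Conditions C2 and C3, equivalently showing the two equations $(x+1)^e\pm(x^e+1)=0$ have only the trivial solutions in $\gf(q)$. I would first try to simplify $x^e = x^{2(3^{m-1}-1)}$. Writing $3^{m-1}=(3^m)/3$ and using $x^{3^m}=x$ on $\gf(3^m)^*$, one gets $x^{3^{m-1}}=x^{1/3}$ in the sense of the inverse Frobenius, so $x^e = x^{2(x^{1/3}/x - 1)}$... more usefully, $x^{3^{m-1}-1}$ is the reciprocal-ish exponent: if $z=x^{3^{m-1}}$ then $z^3=x$, i.e.\ $x^{3^{m-1}-1}=x^{-2/3}$-flavored, so one should substitute $x=t^3$ to clear the cube root. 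Under $x=t^3$ one has $x^{3^{m-1}}=t^{3^m}=t$, hence $x^{3^{m-1}-1}=t^{1-3}=t^{-2}$, and therefore $x^e=x^{2(3^{m-1}-1)}=t^{-4}$. The equations C2/C3 become, after clearing denominators, polynomial identities in $t$ over $\gf(3^m)$, e.g.\ $(t^3+1)^e = t^{-4}(t^3+1)^{2(3^{m-1})}/(t^3+1)^{\cdots}$ — the right normalization is to write $(t^3+1)^{3^{m-1}} = (t+1)^{3^m \cdot 3^{-1}}$... since $3^{m-1}\cdot 3 = 3^m$ and Frobenius is a bijection, $(t^3+1)^{3^{m-1}}$ is the unique cube root of $t^3+1$ raised appropriately; cleaner is to note $u\mapsto u^{3^{m-1}}$ equals $u\mapsto u^{1/3}$ as field automorphisms, so set $s = x^{1/3}$ directly. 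Then C2 reads $(x+1)^{-2/3\cdot?}$ — to avoid fractional exponents entirely I would raise everything to the power $3$ at the end. Concretely: the cleanest route is to show C2 and C3 together are equivalent to the single statement that $(x+1)^{2e}-(x^e+1)^2=0$ has only solutions $x\in\{0,\pm1\}$, exactly as in the earlier proofs, then factor $(x+1)^{2e}-(x^e+1)^2$ after the substitution $x=t^3$ into a product of cyclotomic-type binomials $t^a-1$, and read off the required gcd conditions.

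The main obstacle I anticipate is that after the substitution the resulting exponent combination will involve $3^{m-1}\pm$ small constants, and the factorization of $(x+1)^{2e}-(x^e+1)^2$ will not collapse to a clean product of the form $\prod (t^{a_i}-1)$ with $a_i$ of the shape $(3^{b}-1)$ or $(3^b-1)/2$ — unlike the $e=3^h-1$ and $e=(3^h-1)/2$ cases, the factor $3^{m-1}-1$ sits "one below the modulus," so $x^{3^{m-1}}=x^{1/3}$ genuinely mixes exponents and the polynomial algebra is less symmetric. I would therefore expect to need the primality of $m$ crucially not just for the gcd computations but to control the degrees of the irreducible factors that appear, and possibly to rule out sporadic solutions via a trace or Weil-bound argument if the clean factorization fails. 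A secondary obstacle is that the problem is stated as open, so it is plausible that C2 or C3 in fact fails for some prime $m$ and the honest answer is "no in general"; before committing to a full proof I would check small primes $m=5,7,11,13$ by the factorization above to confirm the pattern, and only then attempt the general argument, flagging clearly that the $m$-prime hypothesis is what makes $\gcd(3^{m-1}-1,3^m-1)=2$ and keeps the relevant auxiliary gcds in $\{1,2\}$.
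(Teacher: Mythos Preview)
The paper does not prove this statement: it is listed there as one of the nine open problems announced in the abstract, and the only evidence offered is a Magma verification for $m\in\{5,7,11,13\}$. There is therefore no proof in the paper to compare your proposal against.

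Your plan of attack is reasonable and follows the template of the proved cases (reduce to Conditions C1--C3 of Theorem~\ref{thm-terncmain}, handle the dimension via $\gcd(e,3^m-1)$ and Lemma~\ref{lem-zero}, then try to factor $(x+1)^{2e}-(x^e+1)^2$). Two remarks. First, the dimension step does not actually need $m$ prime: $\gcd(m-1,m)=1$ for every $m$, so $\gcd(3^{m-1}-1,3^m-1)=2$ always, and for odd $m$ one then gets $\gcd(e,3^m-1)=2$; whatever role primality of $m$ plays, it must be in C2/C3, not here. Second, your substitution $x=t^{3}$ giving $x^{3^{m-1}-1}=t^{-2}$ is a sensible normalization, but as you yourself note the algebra does not collapse to a clean product of binomials the way it does for $e=3^h-1$ or $e=(3^h-1)/2$, and you do not carry it through to an actual factorization or bound. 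As it stands the proposal is an outline of an approach, not a proof; since the statement is genuinely open in the paper, that is the honest stopping point, but you should say so explicitly rather than presenting the write-up as a proof.
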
 

When $m \in \{5, 7, 11, 13\}$, the answer to this question is positive and confirmed by Magma.  

\begin{conjecture} 
Let $e=(3^h+5)/2$, where $1 \le h \le m-1$. Is it true that the ternary cyclic code $\C_{(1,e)}$ 
has parameters $[3^m-1, 3^m-1-2m, 4]$ if   
\begin{itemize}
\item $m$ is odd and $m \not\equiv 0 \pmod{3}$; 
\item $h$ is odd; and 
\item $\gcd(h,m)=1$? 
\end{itemize} 
\end{conjecture}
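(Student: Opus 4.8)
The plan is to verify Conditions C1--C3 of Theorem~\ref{thm-terncmain} for $e=(3^{h}+5)/2$. Write $t=3^{h}$, so $e=(t-1)/2+3$. Since $h$ is odd, $3^{h}\equiv 3\pmod 4$, hence $t+5\equiv0\pmod 4$ and $e$ is even, which gives C1. For the dimension one has to check that $\gcd(e,3^{m}-1)=2$ — so that $\ell_{e}=m$ by Lemma~\ref{lem-zero} and the dimension equals $3^{m}-1-2m$ — and that $e\notin C_{1}$, which is immediate because $0<e<3^{m}-1$ lies strictly between the consecutive powers $3^{h-1}$ and $3^{h}$ and so is not congruent to any power of $3$; the gcd statement should drop out of $\gcd(h,m)=1$ together with $3\nmid m$ via the standard manipulations of $\gcd(3^{a}\pm1,\,3^{m}-1)$ used repeatedly in the paper. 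The substance is C2 and C3.

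Following the pattern of Theorems~\ref{thm-inverseCD05} and~\ref{thm-inverseCD052}, I would multiply the two defining polynomials and simplify $\left((x+1)^{e}+x^{e}+1\right)\left((x+1)^{e}-x^{e}-1\right)=(x+1)^{2e}-(x^{e}+1)^{2}$. Set $f(x)=2x^{4}+x^{3}+x^{2}+2x+1$ and let $\tilde f(x)=x^{4}f(1/x)=x^{4}+2x^{3}+x^{2}+x+2$ be its reciprocal. Using $(x+1)^{3^{h}}=x^{3^{h}}+1$, the identities $(x+1)^{5}-x^{5}=f(x)$ and $(x+1)^{5}-1=x\tilde f(x)$, and $f(x)+\tilde f(x)=2x^{2}$ (all in characteristic $3$), a short computation collapses the product to
$$(x+1)^{2e}-(x^{e}+1)^{2}=x\,\bigl(x^{(t-1)/2}-1\bigr)\,\bigl(f(x)\,x^{(t-1)/2}-\tilde f(x)\bigr)$$
in $\gf(3)[x]$ (writing $w=x^{(t-1)/2}$, the bracket $f(x)w^{2}-2x^{2}w+\tilde f(x)=(w-1)\bigl(f(x)w-\tilde f(x)\bigr)$ precisely because $f+\tilde f=2x^{2}$). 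Exactly as in the proof of Theorem~\ref{thm-terncmain}, C2 and C3 both hold if and only if this polynomial has no root in $\gf(3^{m})^{*}\setminus\{1\}$; since the factor $x$ is harmless, this reduces to showing that neither $F_{1}(x):=x^{(t-1)/2}-1$ nor $F_{2}(x):=f(x)x^{(t-1)/2}-\tilde f(x)$ has a root in $\gf(3^{m})^{*}\setminus\{1\}$.

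For $F_{1}$ this is routine: since $h$ is odd, $\gcd\bigl((t-1)/2,\,3^{m}-1\bigr)=(3^{\gcd(h,m)}-1)/2$ (the same formula used for the $e=(3^{h}-1)/2$ code), which equals $1$ precisely when $\gcd(h,m)=1$, so $F_{1}$ vanishes on $\gf(3^{m})^{*}$ only at $x=1$. The hard part — and the step I expect to be the main obstacle — is $F_{2}$, which is a \emph{twisted} binomial $f(x)x^{(t-1)/2}-\tilde f(x)$ rather than a transparent $x^{a}-1$. Here I would factor out the power of $x-1$, writing $F_{2}(x)=(x-1)^{a}H(x)$ with $H(1)\neq0$ ($a=3$ in the cases I have checked), observe that $H$ is palindromic of degree $(3^{h}+1)/2$, and substitute $y=x+1/x$ to write $H(x)=x^{(3^{h}+1)/4}\,\mathcal H(y)$ for some $\mathcal H\in\gf(3)[y]$ of degree $(3^{h}+1)/4$; for instance $\mathcal H(y)=2y$ when $h=1$, and $\mathcal H(y)=2y(y^{2}+1)(y^{4}+2y^{3}+2)$ when $h=3$. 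A root $x\in\gf(3^{m})^{*}$ of $H$ forces $y_{0}=x+1/x$ to be a root of $\mathcal H$ lying in some $\gf(3^{d})$ with $d\mid m$, and the quadratic $x^{2}-y_{0}x+1$ to split over $\gf(3^{m})$; since $H$ has no root in $\gf(3)$, what must be shown is that, under $m$ odd, $3\nmid m$ and $\gcd(h,m)=1$, no such $y_{0}$ exists. I would try to prove that every irreducible factor of $\mathcal H$ has its roots in a field $\gf(3^{N})$ with $\gcd(N,6m)$ too small to allow such a splitting — ``$m$ odd'' killing the quadratic and the degree-$2$ subfields of $\gf(3)$, ``$3\nmid m$'' excluding factors whose roots sit in $\gf(3^{3})$ or $\gf(3^{6})$, and ``$\gcd(h,m)=1$'' controlling the generic part of $\mathcal H$ in analogy with the role of $\gcd\bigl((3^{h}-1)/2,\,3^{m}-1\bigr)$ in the $e=(3^{h}-1)/2$ case. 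Establishing the precise factorization pattern of $\mathcal H(y)$ uniformly in the odd integer $h$ is the genuine difficulty, which is presumably why this is stated as an open problem rather than a theorem. Once C2 and C3 are in hand, Lemma~\ref{lem-zero} gives $\ell_{e}=m$ and Theorem~\ref{thm-terncmain} delivers the parameters $[3^{m}-1,\,3^{m}-1-2m,\,4]$.
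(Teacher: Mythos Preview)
The statement you are attempting to prove is \emph{not} proved in the paper: it is listed there as an Open Problem (a \texttt{conjecture} environment), and the only supporting evidence the authors give is that ``for all $5\le m\le 15$, the answer to this question is positive and confirmed by Magma.'' There is therefore no ``paper's own proof'' to compare your proposal against.

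Your proposal is not a proof either, and you are candid about this: you correctly set up the problem via Theorem~\ref{thm-terncmain}, you carry out the product trick $(x+1)^{2e}-(x^{e}+1)^{2}$ in the spirit of Theorems~\ref{thm-inverseCD05} and~\ref{thm-inverseCD052}, and you isolate the genuine obstruction, namely the factor $F_{2}(x)=f(x)\,x^{(3^{h}-1)/2}-\tilde f(x)$, which is not of the clean shape $x^{a}-1$ that made the other cases tractable. Your suggested next step --- pass to the palindromic part, substitute $y=x+1/x$, and try to control the splitting fields of the irreducible factors of the resulting $\mathcal H(y)$ uniformly in $h$ --- is a natural line of attack, but as you yourself note, establishing that factorization pattern for all odd $h$ is exactly the open point.

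Two minor technical remarks on the preliminary part of your plan. First, your argument that $e\notin C_{1}$ because $e$ ``lies strictly between the consecutive powers $3^{h-1}$ and $3^{h}$'' fails for $h=1$ (then $e=4>3$); a cleaner observation is simply that $e$ is even while every element of $C_{1}$ is an odd power of $3$. Second, you assert that $\gcd(e,3^{m}-1)=2$ ``should drop out'' of $\gcd(h,m)=1$ and $3\nmid m$ via standard gcd identities, but this is not as immediate as the analogous statements for $e=3^{h}\pm1$ or $e=(3^{h}\pm1)/2$, since $e=(3^{h}+5)/2$ does not factor into expressions of the form $3^{a}\pm1$; this step would itself require a separate argument (or could conceivably impose further constraints beyond the three listed hypotheses).
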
 

For all $5 \le m \le 15$, the answer to this question is positive and confirmed by Magma.  

\begin{conjecture} 
Let $e=(3^{h}-5)/2$, where $2 \le h \le m-1$. Let $m$ be odd and $m \not\equiv 0 \pmod{3}$. Is it true that the ternary cyclic code 
$\C_{(1,e)}$ has parameters $[3^m-1, 3^m-1-2m, 4]$ if $h$ is even?  
\end{conjecture}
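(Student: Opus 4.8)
The plan is to invoke Theorem \ref{thm-terncmain}, so the task reduces to three things: checking that $e = (3^h-5)/2$ is even, computing the cyclotomic coset size $\ell_e$ (to pin down the dimension) together with the condition $e \notin C_1$, and verifying Conditions C2 and C3. Since $m$ is odd and $m \not\equiv 0 \pmod 3$, the parity of $e$ forces $h$ even: writing $3^h - 5 = 3^h - 1 - 4$, and noting $(3^h-1)/2$ is even iff $h$ is even, one sees $e = (3^h-1)/2 - 2$ is even iff $h$ is even. For the dimension, I expect $\gcd(e, 3^m-1) = 2$ to hold under the stated hypotheses (this is where $m \not\equiv 0 \pmod 3$ and $\gcd(h,m)=1$ presumably enter), so that Lemma \ref{lem-zero} gives $\ell_e = m$; the condition $e \notin C_1$ then follows because $3^i e$ is odd while $3^m - 1$ is even, exactly as in Lemma \ref{lem-lemPNAPN}.

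For Conditions C2 and C3, I would follow the now-standard device used repeatedly in this paper: since $e$ is even, write $2e = 3^h - 5$ and form the product of the two left-hand sides,
\begin{eqnarray*}
\left((x+1)^e + x^e + 1\right)\left((x+1)^e - x^e - 1\right) &=& (x+1)^{2e} - (x^e+1)^2 \\
&=& (x+1)^{3^h-5} - \left(x^{(3^h-5)/2} + 1\right)^2.
\end{eqnarray*}
Conditions C2 and C3 both hold precisely when this expression has no root $x \in \gf(q) \setminus \{0, \pm 1\}$ (one checks separately that $x = 1$ satisfies C2's equation and $x = 0$ satisfies C3's, and that $x = -1$ is excluded by parity considerations). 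I would then clear denominators — multiplying through by a suitable power of $x$ and of $(x+1)$, using $(x+1)^{3^h} = (x^3+1)^h$-type expansions in characteristic $3$ — to reduce the expression to $x^{a}(x+1)^{b}$ times a product of cyclotomic-type factors $\bigl(x^{c} - 1\bigr)\bigl(x^{d} - 1\bigr)$ for explicit small exponents $c, d$ depending on $h$. The non-existence of a root outside $\{0, \pm 1\}$ then becomes a set of gcd conditions $\gcd(c, 3^m-1), \gcd(d, 3^m-1) \in \{1,2\}$, which I would translate via the identity $\gcd(3^i-1, 3^m-1) = 3^{\gcd(i,m)}-1$ (and its even-$i$ variant) into conditions on $\gcd(h,m)$, $\gcd(h-1,m)$ or similar, finally matching these against "$h$ even" under the running hypotheses $m$ odd, $3 \nmid m$.

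The main obstacle I anticipate is the algebraic simplification in the middle step: unlike the clean cases $e = (3^h+1)/2$ or $e = 3^h-1$ treated earlier, here $3^h - 5$ does not factor nicely, so after multiplying out $(x+1)^{3^h-5}$ and $\bigl(x^{(3^h-5)/2}+1\bigr)^2$ I expect to have to manipulate a polynomial with several terms and coax it into a product of two (or perhaps three) binomial factors in $x$; getting the exponents exactly right, and correctly identifying which small factor $(x - 1)$, $(x+1)$, or $x$ accounts for the "allowed" roots, is the delicate bookkeeping. A secondary subtlety is that the constraint is an \emph{iff}, so for the necessity direction I must also exhibit, whenever $h$ is odd (or whenever a gcd condition fails), an explicit element $x = \alpha^{(3^m-1)/t}$ that is a genuine root lying outside $\{0,\pm1\}$ — analogous to the construction in the proof of Theorem \ref{thm-inverseCD052} — and check it really produces a weight-$3$ codeword. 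Once the factorization is in hand, the remaining gcd/parity arithmetic is routine and parallels the other proofs in this section.
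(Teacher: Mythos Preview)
The statement you are attempting to prove is not a theorem in the paper --- it is explicitly listed as an \emph{open problem}. The paper offers no proof; it only reports that the answer is positive for all $3 \le m \le 15$ by Magma computation. There is therefore no ``paper's own proof'' to compare your proposal against.

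Your proposed approach --- forming the product $(x+1)^{2e} - (x^e+1)^2$ and trying to factor it into binomials whose roots are controlled by gcd conditions --- is exactly the template used successfully elsewhere in the paper (e.g.\ Theorem~\ref{thm-inverseCD05} and the case $e=(3^h-1)/2$). That the authors nevertheless posed this $e=(3^h-5)/2$ case as an open problem is strong evidence that they tried this route and the factorization did not cooperate. You yourself identify the algebraic simplification as ``the main obstacle,'' and indeed you have not carried it out: you have no candidate factorization, only the hope that one exists with exponents $c,d$ you have not determined. Until that step is actually executed, what you have is a restatement of the standard strategy, not a proof.

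Two smaller issues with the proposal as written. First, the conjecture asks only whether $h$ even is \emph{sufficient}; it is not phrased as an if-and-only-if, so your ``necessity direction'' paragraph addresses something not asked (and in any case necessity of $h$ even is immediate from Condition~C1). Second, you import the hypothesis $\gcd(h,m)=1$ (``this is where \ldots $\gcd(h,m)=1$ presumably enter''), but that condition is not part of the stated conjecture; the only hypotheses are $m$ odd, $3 \nmid m$, and $h$ even. If your argument genuinely requires $\gcd(h,m)=1$, you would be proving something strictly weaker than what is conjectured.
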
 

For all $3 \le m \le 15$, the answer to this question is positive and confirmed by Magma.  

\begin{conjecture} 
Let $e=(3^h-5)/2$, where $2 \le h \le m-1$.  Let $m$ be even. What are the conditions on $m$ 
and $h$ under which the ternary cyclic code 
$\C_{(1,e)}$ has parameters $[3^m-1, 3^m-1-2m, 4]$?   
\end{conjecture}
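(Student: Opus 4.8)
The statement in question concerns $e=(3^h-5)/2$ with $2 \le h \le m-1$ and $m$ even, asking for the conditions under which $\C_{(1,e)}$ is optimal with parameters $[3^m-1, 3^m-1-2m, 4]$. Since this is posed as an Open Problem, there is no claimed proof to reconstruct; instead I would aim to reduce the question to an explicit gcd condition on $m$ and $h$, in exactly the style of the preceding theorems. The plan is to apply Theorem~\ref{thm-terncmain}: optimality is equivalent to $e$ being even together with Conditions C1, C2, C3, plus the auxiliary facts $e \not\in C_1$ and $\ell_e = m$. First I would record that $e=(3^h-5)/2$ is even precisely when $h$ is even (since $3^h$ is odd and $3^h-5 \equiv 3^h-1 \pmod 4$ behaves according to the parity of $h$), so from now on assume $h$ even; in particular $x^e$ has no weight-two codeword issue.

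Next I would handle Conditions C2 and C3 jointly via the product trick used repeatedly in the excerpt: C2 and C3 hold simultaneously if and only if $(x+1)^{2e}-(x^e+1)^2=0$ has no solution $x \in \gf(q)\setminus\{0,\pm 1\}$. Here $2e = 3^h-5$, so one computes $(x+1)^{3^h-5} - (x^{(3^h-5)/2}+1)^2$ and clears denominators by multiplying through by a suitable power of $x(x+1)$, turning the condition into the statement that a certain explicit polynomial — a product of cyclotomic-type factors $x^{a_i}-1$ coming from $x^{3^h}$, $x^{3^{h-?}}$, and low-degree corrections — has no root in $\gf(q)$ outside $\{0,\pm 1\}$. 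That last step uses $\gcd(x^{a}-1, x^{3^m-1}-1)=x^{\gcd(a,3^m-1)}-1$ together with the standard formula $\gcd\!\big(\tfrac{3^i-1}{2},3^m-1\big)$ recorded in the proof of the previous theorem. The resulting conditions will be of the form $\gcd(h,m)=1$, $\gcd(h-c,m)=1$, and possibly a congruence on $m$ modulo a small integer (e.g. $m \not\equiv 0 \pmod 3$, mirroring the odd-$m$ conjecture just above), arising from whether the low-degree correction factor has a common root with $x^{3^m-1}-1$. Finally, once these gcd conditions force $\gcd(e,3^m-1)=2$, Lemma~\ref{lem-zero} gives $\ell_e=m$, and writing $e=(3^h-5)/2$ in base $3$ shows $e$ is not a power of $3$ modulo $3^m-1$, hence $e \not\in C_1$; then Theorem~\ref{thm-terncmain} and the Sphere Packing bound close the argument.

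The main obstacle is the bookkeeping in the C2/C3 reduction: because $e=(3^h-5)/2$ is not of the clean shape $(3^h\pm 1)/2$, after clearing denominators the polynomial $(x+1)^{3^h-5}-(x^{(3^h-5)/2}+1)^2$ will not factor as neatly as in Theorem~\ref{thm-inverseCD05} or the $e=(3^h-1)/2$ theorem. I expect an extra factor of modest degree (independent of $h$, roughly degree $4$ or so, coming from the $-5$ shift) that must be analyzed separately — its roots in $\gf(q)$ will be the source of any congruence restriction on $m$, and ruling them out uniformly for even $m$ is where the real case analysis lies. Because the even-$m$ case typically splits further according to $h \bmod 4$ and the $2$-adic valuation of $3^h-1$ versus $3^m-1$ (as seen in Theorem~\ref{thm-e213}), I anticipate that the final answer is a small finite list of congruence-and-coprimality conditions rather than a single clean criterion, which is presumably why the authors left it open.
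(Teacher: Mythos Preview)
The statement you are addressing is labeled in the paper as an \emph{Open Problem} (the \texttt{conjecture} environment is declared with the name ``Open Problem''), and the paper gives no proof or even a partial argument for it; it is simply listed among nine unresolved questions. So there is nothing in the paper to compare your attempt against, and you correctly recognize this at the outset.

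Your proposed line of attack is entirely in the spirit of the paper: invoke Theorem~\ref{thm-terncmain}, handle C1 by the parity of $h$ (your computation that $e=(3^h-5)/2$ is even iff $h$ is even is correct, since $3^h\equiv(-1)^h\pmod 4$), and then attack C2 and C3 simultaneously via the product $(x+1)^{2e}-(x^e+1)^2$, exactly as the paper does for $e=(3^h+1)/2$ and $e=(3^h-1)/2$. Your plan to reduce to gcd conditions via $\gcd(x^a-1,x^{3^m-1}-1)=x^{\gcd(a,3^m-1)}-1$ and then recover $\ell_e=m$ from Lemma~\ref{lem-zero} is the paper's standard template.

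Where your proposal is honest rather than complete is precisely the point you flag: with $2e=3^h-5$, the expression $(x+1)^{3^h-5}-(x^{(3^h-5)/2}+1)^2$, after clearing $(x+1)^5$, does \emph{not} collapse to a product of factors of the form $x^a-1$ as in the clean cases; one is left with a genuine mixed polynomial whose roots in $\gf(3^m)$ must be controlled, and it is not obvious that this residual factor has degree independent of $h$ or that its roots are governed by a simple congruence on $m$. This is exactly why the authors left the question open, and your acknowledgment of this obstacle is appropriate. In short: your plan is the natural one and matches the paper's methodology, but it is a research outline rather than a proof, which is all one can reasonably offer for a statement the paper itself does not resolve.
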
 

\begin{conjecture} 
Let $e=3^h+5$, where $2 \le h \le m-1$.  Let $m$ be even. Is it true that the ternary cyclic code 
$\C_{(1,e)}$ has parameters $[3^m-1, 3^m-1-2m, 4]$ if one of the following conditions is 
met?   
\begin{itemize}
\item $m \equiv 0 \pmod{4}$, $m \ge 4$ and $h =m/2$. 
\item $m \equiv 2 \pmod{4}$, $m \ge 6$ and $h =(m+2)/2$.    
\end{itemize} 
\end{conjecture}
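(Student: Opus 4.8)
The plan is to verify, for each exponent $e=3^h+5$ in the two stated families, the three conditions of Theorem~\ref{thm-terncmain} together with the dimension data $e\notin C_1$, $\ell_e=m$.

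\emph{The routine parts.} Since $e$ is a sum of two odd numbers it is even, so C1 holds. In both families $h$ is even (either $h=m/2$ with $4\mid m$, or $h=m/2+1$ with $m/2$ odd), hence $3^h\equiv1\pmod 8$ and $e\equiv6\pmod 8$, so $2\,\|\,e$; also $e\equiv2\pmod 3$. As every residue of a power of $3$ modulo $3^m-1$ lies in $\{1,3,\dots,3^{m-1}\}$ and is $\equiv0$ or $1\pmod 3$, and $1<e<3^m-1$, we get $e\notin C_1$. For $\gcd(e,3^m-1)$: from $3^h\equiv e-5$ and $3^{2h}\equiv 3^{\,2h-m}\pmod{3^m-1}$ — which equals $1$ when $2h=m$ and $9$ when $2h=m+2$ — one gets $(e-5)^2\equiv1$ or $9$, i.e.\ $(e-2)(e-8)\equiv0$ or $(e-4)(e-6)\equiv0\pmod{3^m-1}$; an odd prime dividing $\gcd(e,3^m-1)$ would then divide $16$ or $24$, hence equal $3$, contradicting $e\equiv2\pmod3$, while $4\nmid e$ pins the $2$-part to $2$. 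So $\gcd(e,3^m-1)=2$, and Lemma~\ref{lem-zero} gives $\ell_e=m$; hence $\C_{(1,e)}$ has dimension $3^m-1-2m$.

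\emph{Recasting C2 and C3.} In characteristic $3$, $(x+1)^{3^h}=x^{3^h}+1$, so $(x+1)^e=(x^{3^h}+1)(x+1)^5$, and expanding $(x+1)^5=(x^2-x+1)(x^3+1)$ turns C3 and C2 into
\[
x^{3^h}\bigl((x+1)^5-x^5\bigr)=1-(x+1)^5,\qquad
x^{3^h}\bigl((x+1)^5+x^5\bigr)=-\bigl((x+1)^5+1\bigr).
\]
The loci where the coefficient of $x^{3^h}$ vanishes give nothing new: $(x+1)^5-x^5=0$ together with C3 forces $(x+1)^5-1=0$, so $x$ and $x+1$ would both be fifth roots of unity, which is impossible over $\overline{\gf(3)}$; likewise $(x+1)^5+x^5=0$ with C2 forces, using $x^2+x+1=(x-1)^2$ over $\gf(3)$, that $x=1$. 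Off these loci we may write $x^{3^h}=R(x)$ for C3 and $x^{3^h}=T(x)$ for C2, where
\[
R(x)=\frac{1-(x+1)^5}{(x+1)^5-x^5},\qquad
T(x)=-\frac{(x+1)^5+1}{(x+1)^5+x^5},
\]
and (by the impossibility just noted) the numerator and denominator of $R$, and of $T$ after cancelling the common factor $x-1$, are coprime over $\gf(3)$, so $R$ and $T$ are maps of $\mathbb{P}^1$ of degrees $5$ and $4$. Two identities then do the work: $(x+1)^6-x^6-1=-x^3$, and — since $x^6+x^3+1=(x-1)^6$ over $\gf(3)$ — $(x+1)^6+x^6+1=-(x-1)^6$. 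The first shows the only affine fixed point of $R$ is $x=0$; the second shows the only affine fixed point of $T$ is $x=1$. These are exactly the trivial solutions allowed by C3 and C2, so any genuine counterexample must be a point of \emph{exact period $2$} of $R$ or of $T$.

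\emph{Killing the $2$-cycles, and the main obstacle.} Applying $x\mapsto x^{3^h}$ to $x^{3^h}=R(x)$ and using $x^{3^{2h}}=x$ in the first family ($2h=m$) or $x^{3^{2h}}=x^9$ in the second ($2h=m+2$) yields $R(R(x))=x$, resp.\ $R(R(x))=x^9$, and similarly for $T$. In the first family this says precisely: a nontrivial solution in $\gf(3^m)$ of $x^{3^{m/2}}=R(x)$ is a period-$2$ point $a$ of $R$ whose cycle $\{a,R(a)\}$ is swapped by the involution $\phi^{m/2}$, i.e.\ $a\notin\gf(3^{m/2})$ and $a^{3^{m/2}}=R(a)$. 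Writing $d=[\gf(3)(a):\gf(3)]$, one then has $d\mid m$, $d\nmid m/2$, hence $v_2(d)=v_2(m)\ge2$ (so $4\mid d$) and $R(a)=a^{3^{d/2}}$. Thus it suffices to factor the fixed, $m$-independent period-$2$ polynomial of $R$ over $\gf(3)$ and check that no irreducible factor of degree divisible by $4$ has a root $a$ with $R(a)=a^{3^{d/2}}$; the analogous computation for $T$ settles C2. In the second family $x\mapsto x^{3^h}$ is no longer an involution, so the $2$-cycle picture is replaced by the requirement that a fixed or period-$2$ point $a$ of $R$ (resp.\ $T$) satisfy $R(R(a))=a^9$ together with the twisted relation $a^{3^h}=R(a)$; the underlying finite check is of the same flavour but more delicate. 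I expect this last step to be the main obstacle: turning the explicit factorizations of the period-$2$ polynomials of $R$ and $T$ over $\gf(3)$ into a statement valid for \emph{every} $m$ in the prescribed residue class, and in particular ruling out, in the non-involutive second family, any accidental Frobenius congruence that would make C2 or C3 fail for some such $m$. The reported Magma verifications for small $m$ serve as a consistency check on those factorizations.
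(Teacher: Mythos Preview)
The statement you are addressing is not proved in the paper: it is listed there as an open problem, with only Magma verification for $m\in\{6,10,14,18\}$ offered in its support. So there is no paper-side argument to compare against, and the relevant question is simply whether your proposal actually settles the conjecture.

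It does not, and you say so yourself. Your ``routine parts'' are correct (up to a harmless swap: $(e-5)^2\equiv1$ gives $(e-4)(e-6)\equiv0$, not $(e-2)(e-8)\equiv0$, and conversely for the $\equiv9$ case; the conclusion $\gcd(e,3^m-1)=2$ is unaffected). The recasting of C2 and C3 as $x^{3^h}=R(x)$ and $x^{3^h}=T(x)$, the fixed-point identities $(x+1)^6-x^6-1=-x^3$ and $(x+1)^6+x^6+1=-(x-1)^6$, and the observation that any nontrivial solution must be a period-$2$ point of $R$ or $T$, are all correct and make for an attractive reduction. For the first family your argument that $d\mid m$, $d\nmid m/2$ forces $v_2(d)=v_2(m)$ and hence $a^{3^{m/2}}=a^{3^{d/2}}$ is right, so the question genuinely collapses to a single finite check independent of $m$: factor the period-$2$ polynomials of $R$ and $T$ over $\gf(3)$ and, for each irreducible factor of degree $d$ with $4\mid d$, verify $R(a)\ne a^{3^{d/2}}$ (resp.\ $T(a)\ne a^{3^{d/2}}$).

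The gap is that you do not carry out this check; and for the second family you do not even reduce to a finite, $m$-independent verification --- the relation $R(R(a))=a^9$ bounds the candidate $a$'s, but the compatibility condition $a^{3^h}=R(a)$ still involves $h=(m+2)/2$, and you have not shown how to make this uniform in $m\equiv2\pmod4$. (A small loose end in the setup: the assertion that no two fifth roots of unity in $\overline{\gf(3)}$ differ by $1$ also needs a one-line justification.) As written, then, this is a promising plan that would, if completed, go well beyond what the paper offers --- but it is not yet a proof.
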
 

For $m \in \{6, 10, 14, 18\}$, the answer to this question is positive and confirmed by Magma.

\begin{conjecture} 
Let $e=3^h+5$, where $0 \le h \le m-1$.  Let $m\ge 5$ and let $m$ be a prime. Is it true that the 
ternary cyclic code $\C_{(1,e)}$ has parameters $[3^m-1, 3^m-1-2m, 4]$ for all $h$ with 
$0 \le h \le m-1$?  
\end{conjecture}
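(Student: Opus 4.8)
\medskip

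The plan is to verify the hypotheses of Theorem~\ref{thm-terncmain}. First, if $h=0$ then $e=6$ and $x^{6}=(x^{2})^{3}$ is planar over $\gf(3^{m})$, so $\C_{(1,e)}$ already has the claimed parameters by Theorem~\ref{thm-CDY05}; hence I may assume $1\le h\le m-1$. Now $e=3^{h}+5$ is a sum of two odd integers, so it is even and Condition~C1 holds. Since $3^{i}e$ is even while $1$ is odd and $3^{m}-1$ is even, $3^{i}e\not\equiv 1\pmod{3^{m}-1}$ for every $i$, hence $e\notin C_{1}$. Because $m$ is prime, $\ell_{e}\mid m$ forces $\ell_{e}\in\{1,m\}$; but $\ell_{e}=1$ would require $(3^{m}-1)\mid 2e=2\cdot3^{h}+10$, impossible as $2\cdot3^{h}+10\le 2\cdot3^{m-1}+10<3^{m}-1$ for $m\ge5$. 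Thus $\ell_{e}=m=|C_{1}|$, and since $e\notin C_{1}$ the generator polynomial $m_{\alpha}(x)m_{\alpha^{e}}(x)$ has degree $2m$, so the dimension of $\C_{(1,e)}$ is $3^{m}-1-2m$. Everything therefore reduces to Conditions~C2 and~C3. (Note that one cannot here appeal to Lemma~\ref{lem-zero}, since $\gcd(e,3^{m}-1)$ need not be $2$; e.g.\ for $m=11$, $h=9$ it equals $46$.)

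To handle C2 and C3 I would first put them in a usable form using $p=3$: from $(x+1)^{3^{h}}=x^{3^{h}}+1$ and $e=3^{h}+5$ one gets $(x+1)^{e}=(x^{3^{h}}+1)(x+1)^{5}$ and $x^{e}=x^{3^{h}}x^{5}$, whence
\[
(x+1)^{e}+x^{e}+1=(x^{3^{h}}+1)\big[(x+1)^{5}+x^{5}\big]-(x^{5}-1),
\]
\[
(x+1)^{e}-x^{e}-1=(x^{3^{h}}+1)\big[(x+1)^{5}-x^{5}\big]+(x^{5}-1).
\]
So C2 is equivalent to saying that $(x+1)^{3^{h}}\big[(x+1)^{5}+x^{5}\big]=x^{5}-1$ has $x=1$ as its only solution in $\gf(3^{m})^{*}$, and C3 to saying that $(x+1)^{3^{h}}\big[(x+1)^{5}-x^{5}\big]=1-x^{5}$ has $x=0$ as its only solution in $\gf(3^{m})$. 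Several auxiliary facts are available: since $m\ge5$ is prime, $4\nmid m$, so $\gcd(5,3^{m}-1)=1$ (because $5\mid3^{m}-1$ iff $4\mid m$), hence $x\mapsto x^{5}$ is a bijection of $\gf(3^{m})$, $x^{5}=1$ forces $x=1$, $(x+1)^{5}=x^{5}$ is impossible for $x\ne0$, and $(x+1)^{5}+x^{5}=0$ forces $(x+1)/x=-1$, i.e.\ $x=1$; also $\gcd(3^{h}-1,3^{m}-1)=3^{\gcd(h,m)}-1=2$, so $x^{3^{h}-1}=1$ forces $x=\pm1$.

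With this setup, for $x\in\gf(3^{m})\setminus\{0,\pm1\}$ the bracketed factors and $x+1$ are nonzero, so C2 fails at $x$ iff $(x+1)^{3^{h}}=\frac{x^{5}-1}{(x+1)^{5}+x^{5}}$ and C3 fails at $x$ iff $(x+1)^{3^{h}}=\frac{1-x^{5}}{(x+1)^{5}-x^{5}}$; as the right-hand sides lie in $\gf(3)(x)$, these are equivalent respectively to $x+1=\big(\frac{x^{5}-1}{(x+1)^{5}+x^{5}}\big)^{3^{m-h}}$ and $x+1=\big(\frac{1-x^{5}}{(x+1)^{5}-x^{5}}\big)^{3^{m-h}}$. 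Two observations can be made cheaply. First, both $(x+1)^{e}+x^{e}+1$ and $(x+1)^{e}-x^{e}-1$ are self-reciprocal (invariant under $x\mapsto x^{-1}$), so their roots in $\gf(3^{m})^{*}$ come in pairs $\{x,x^{-1}\}$; removing the obvious factor leaves a self-reciprocal polynomial of even degree, which can be rewritten as a polynomial in $t=x+x^{-1}$ of half that degree, so one may work with $t$. Second, subtracting the two ``failure'' identities gives $\frac{x^{5}-1}{(x+1)^{5}+x^{5}}=\frac{1-x^{5}}{(x+1)^{5}-x^{5}}$, which simplifies to $(x^{5}-1)(x+1)^{5}=0$; hence C2 and C3 cannot fail at the same $x\notin\{0,\pm1\}$, i.e.\ their exceptional sets are disjoint. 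The remaining, and genuinely hard, task is to prove that each exceptional set is empty.

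The main obstacle is this. For the exponents treated in the earlier sections --- $(3^{h}\pm1)/2$, $3^{h}\pm1$, $2(3^{h}+1)$ --- clearing denominators in C2 and C3 produces polynomials whose degree is bounded independently of $h$, so the conditions collapse to $\gcd$ statements about fixed-degree cyclotomic-type polynomials with $x^{3^{m}-1}-1$, which is why those families admit clean necessary-and-sufficient descriptions. Here the Frobenius factor $(x+1)^{3^{h}}$ does not go away: the polynomial form of C3 has degree $3^{h}+4$, the self-reciprocal reduction only brings this to $(3^{h}+3)/2$, and the inverted form has degree about $5\cdot3^{m-h}$; one of these two is always of size at least roughly $3^{m/2}$, far beyond the trivial bound $3^{m}$ that a crude root count would require. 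Iterating the functional equation $x^{3^{h}}=\psi(x)$ (with $\psi\in\gf(3)(x)$ of degree at most $5$) only shows that each bad $x$ is a periodic point of period dividing $m$ of a degree-$5$ rational map, and there can be as many as $5^{m}$ of those. A uniform-in-$h$ proof thus appears to need one of: a case analysis on $h$ (or on $m$) modulo small integers, together with ad hoc factorizations of the polynomials in $t$; a Weil/Stickelberger-type estimate for the number of $\gf(3^{m})$-points of the relevant algebraic set sharp enough to pin it to a single point, which looks delicate; or a proof that some exponent in the cyclotomic coset $C_{e}$ is APN or planar whenever $m$ is prime. The last route cannot cover the whole family: for $m=5$, $h=3$ one has $e=32$, and neither $32$ nor any other element of $C_{32}$ appears among the known APN or planar monomials over $\gf(3^{5})$. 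I expect the case-analysis route to be the most promising but laborious, with the values of $h$ near $m/2$ --- where neither degree bound is small --- the most resistant.

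\medskip
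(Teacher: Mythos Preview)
The paper does not prove this statement: it is explicitly posed there as an open problem (one of the nine listed in the concluding sections), with only computational confirmation for $m\in\{3,5,7,11,13,17\}$ via Magma. So there is no ``paper's own proof'' to compare against.

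Your proposal is not a proof either, and you say as much. The preliminary reductions you carry out are correct and are exactly the kind of groundwork one would want: the case $h=0$ does reduce to the planar monomial $x^2$ via $C_6=C_2$; the parity argument gives $e\notin C_1$; and the primality of $m$ together with the crude size bound $2e<3^m-1$ for $m\ge5$ forces $\ell_e=m$, settling the dimension without appealing to Lemma~\ref{lem-zero} (your remark that $\gcd(e,3^m-1)$ need not equal $2$ is well taken). Your rewriting of C2 and C3, the self-reciprocity observation, and the disjointness of the two exceptional sets are also correct and useful.

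Where you stop is precisely where the difficulty lies, and your diagnosis of the obstacle is accurate: unlike the exponents $(3^h\pm1)/2$, $3^h\pm1$, $2(3^h+1)$ handled in the paper, clearing denominators here leaves a polynomial whose degree genuinely grows with $h$ (or with $m-h$ after inverting Frobenius), so the problem does not collapse to a fixed-degree $\gcd$ computation. This is exactly why the paper leaves the question open. In short, your write-up is a sound partial analysis that correctly identifies the gap; it is not a proof, but no proof is expected here.
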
 

For $m \in \{3, 5, 7, 11, 13, 17\}$, the answer to this question is positive and confirmed by Magma.   

\begin{conjecture} 
Let $e=3^h+13$, where $3 \le h \le m-1$.  Let $m$ be an odd prime.  What are the conditions on 
$h$ under which the ternary cyclic code $\C_{(1,e)}$ has parameters $[3^m-1, 3^m-1-2m, 4]$?  
\end{conjecture}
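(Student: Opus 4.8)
The statement is an open problem, so what follows is a plan for finding, and then proving, the correct condition on $h$; all arithmetic below is in characteristic $3$. The plan is to run everything through Theorem~\ref{thm-terncmain}, first clearing away the bookkeeping. Condition~C1 is automatic because $3^h+13$ is even; the requirement $e\notin C_1$ is automatic because $e$ is even while every element of $C_1$ is odd and $3^m-1$ is even; and since $m$ is prime, $\ell_e\in\{1,m\}$, with $\ell_e=1$ ruled out by the estimate $2e=2\cdot3^h+26<3^m-1$ valid for $m\ge5$ (the case $m=3$ being vacuous, since it forces $h\le2<3$). Hence $\dim\C_{(1,e)}=3^m-1-2m$, and the problem reduces to deciding, for a given $h$, whether Conditions~C2 and~C3 both hold.

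As in the proof of Theorem~\ref{thm-inverseCD05} and its relatives, I would treat C2 and C3 together by requiring that $(x+1)^{2e}-(x^e+1)^2$ have no root in $\gf(3^m)\setminus\{0,\pm1\}$. Using $(x+1)^{3^h}=x^{3^h}+1$ and $(x+1)^{27}=x^{27}+1$, and writing $u=x^{3^h}$, multiplication by $x+1$ yields
\begin{equation*}
\bigl[(x+1)^{2e}-(x^e+1)^2\bigr](x+1)=(x^{13}-1)\bigl[(x^{13}+1)(x-u^2)-u(x^{14}-1)\bigr].
\end{equation*}
Since $\ord_{13}(3)=3$, a prime $m\ge5$ has $13\nmid3^m-1$, so $1$ is the only $13$th root of unity in $\gf(3^m)$ and the factor $x^{13}-1$ contributes no forbidden root; the heart of the matter is the bracket, a quadratic in $u$ whose discriminant works out to the perfect square $(x+1)^{28}$. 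Hence the bracket splits over $\gf(3^m)$, and Conditions~C2 and~C3 together amount to the requirement that neither equation
\begin{equation*}
x^{3^h}\bigl(x^{13}+1\bigr)=R_{\pm}(x),\qquad R_{\pm}(x)=x^{14}-1\pm(x+1)^{14}
\end{equation*}
(here $R_\pm$ is explicit, of degree at most $14$) has a solution $x\in\gf(3^m)\setminus\{0,\pm1\}$. One would then factor $R_\pm(x)$ and $x^{13}+1$ over $\gf(3)$ (a finite computation) and, for each irreducible factor $\pi$, decide for which $h$ it divides $x^{3^h}(x^{13}+1)-R_\pm(x)$ in $\gf(3^m)[x]$: since $x^{13}+1\ne0$ away from $x=-1$, this reads $x^{3^h}=\rho_\pm(x)$ with $\rho_\pm\in\gf(3)(x)$, and iterating Frobenius ($x^{3^{kh}}=\rho_\pm^{\circ k}(x)$, then choosing $k$ with $kh\equiv1\pmod m$, possible because $\gcd(h,m)=1$) collapses it to a single rational-function equation in $x$, whose surviving solution sets translate into gcd conditions $\gcd(N,3^m-1)$ for integers $N$ depending on $\pi$ and $h$.

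The main obstacle — presumably the reason this was left open — is that, in contrast to the families $e=3^h\pm1$ and $e=(3^h\pm1)/2$, where the residual polynomial factors into cyclotomic pieces $x^{\gcd(a,m)}-1$ governed by $\gcd(3^a-1,3^m-1)=3^{\gcd(a,m)}-1$, here the constant $13$ acts as a fixed-degree perturbation that does not merge with the floating Frobenius power $x^{3^h}$ into such a pattern, so the iteration above blows up in degree unless $k$ can be kept small. I expect the correct condition to be a conjunction of $\gcd(h,m)=1$ (automatic for prime $m$), a handful of conditions of the form $\gcd(3^{h}-c,\,3^m-1)\in\{1,2\}$ with small constants $c$ read off from the irreducible factors of $R_\pm$, and one or two congruence conditions on $h$ modulo small integers — the neighbouring open problems of this section already exhibit conditions such as ``$h$ odd'' or ``$m\not\equiv0\pmod3$''. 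A B\'ezout or Weil estimate only caps the number of affine $\gf(3^m)$-points of the relevant curve at $O(3^{m/2})$, which is far too weak to certify ``only $x=\pm1$,'' so the decisive argument has to be the exact factorization analysis, not a counting one.
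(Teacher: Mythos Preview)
The paper offers no proof of this statement: it is posed there purely as an open problem, with no accompanying analysis, partial result, or even numerical evidence. There is therefore nothing to compare your attempt against.

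On its own merits, your preliminary bookkeeping is correct: $e$ is even, $e\notin C_1$, and for prime $m\ge5$ the coset length is $m$, so the dimension is as claimed. Your key algebraic identity also checks out: with $u=x^{3^h}$ one has, in characteristic~$3$,
\[
\bigl[(x+1)^{2e}-(x^e+1)^2\bigr](x+1)=(x^{13}-1)\bigl[(x^{13}+1)(x-u^2)-u(x^{14}-1)\bigr],
\]
the bracket is quadratic in $u$ with discriminant $(x^{14}-1)^2+x(x^{13}+1)^2=x^{28}+x^{27}+x+1=(x+1)^{28}$, and $\mathrm{ord}_{13}(3)=3$ disposes of the factor $x^{13}-1$ for prime $m\ge5$. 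So the reduction to the pair of equations $x^{3^h}(x^{13}+1)=x^{14}-1\pm(x+1)^{14}$ is sound.

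That said, what you have is a plan, not a proof, and you say as much. The remaining step --- determining for which $h$ neither equation has a solution in $\gf(3^m)\setminus\{0,\pm1\}$ --- is precisely the open content of the problem, and your proposed Frobenius-iteration scheme does not obviously terminate in a clean gcd condition (as you note, the degree blows up unless $k$ is small). Your expectations about the shape of the answer are plausible but unsupported. In short: the reduction is a genuine contribution beyond what the paper contains, but the problem remains open after it.
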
 

\begin{conjecture} 
Let $e=(3^{m-1}-1)/2+3^h+1$, where $0 \le h \le m-1$.  What are the conditions on $m$ and $h$ 
under which the 
ternary cyclic code $\C_{(1,e)}$ has parameters $[3^m-1, 3^m-1-2m, 4]$? 
\end{conjecture}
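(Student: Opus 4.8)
The plan is to route everything through Theorem~\ref{thm-terncmain}, so that the problem becomes deciding, as a function of $m$ and $h$, exactly when C1, C2, C3 hold (the dimension then falling out automatically). First I would dispose of C1 and of the dimension bookkeeping. Writing $e=\frac{3^{m-1}-1}{2}+3^h+1$ and noting $3^h+1$ is even, the parity of $e$ equals that of $\frac{3^{m-1}-1}{2}$, which is even exactly when $m-1$ is even. Hence \emph{$e$ is even iff $m$ is odd}, so C1 forces $m$ odd. Because $3^m-1$ is even, reduction modulo $3^m-1$ preserves parity; since every element of $C_1=\{3^i\bmod(3^m-1)\}$ is odd while $e$ is even, one gets $e\notin C_1$ for free. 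For the dimension I would compute $\gcd(e,3^m-1)$, and whenever it equals $2$ invoke Lemma~\ref{lem-zero} to conclude $\ell_e=m$ and dimension $3^m-1-2m$; the sought conditions on $m,h$ must in particular guarantee this gcd is $2$.

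For C2 and C3 I would use the device of Theorem~\ref{thm-inverseCD05} and of the theorem treating $e=(3^h-1)/2$: C2 and C3 hold simultaneously iff $(x+1)^{2e}-(x^e+1)^2=0$ has no root in $\gf(q)\setminus\{0,\pm1\}$. The essential new idea is the congruence
$$
e \equiv \frac{3^m-1}{2} + \bigl(3^h+1-3^{m-1}\bigr) \pmod{3^m-1},
$$
which follows from $\frac{3^{m-1}-1}{2}=\frac{3^m-1}{2}-3^{m-1}$. Thus for $x\in\gf(q)^*$ one has $x^e=\eta(x)\,x^{\,3^h+1-3^{m-1}}$, where $\eta(x)=x^{(3^m-1)/2}\in\{1,-1\}$ is the quadratic character. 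Since $t\mapsto t^3$ is injective, C2 (resp.\ C3) is equivalent to its cube, and cubing is additive in characteristic~$3$; using $3e\equiv \frac{3^m-1}{2}+3^{h+1}+2$ this clears the negative exponent, giving $x^{3e}=\eta(x)\,x^{g}$ with $g=3^{h+1}+2$, so that C2 and C3 become, for $x\neq 0,-1$,
$$
\eta(x+1)(x+1)^{g}\;\pm\;\bigl(\eta(x)\,x^{g}+1\bigr)=0 .
$$

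I would then split according to the four sign patterns of $(\eta(x),\eta(x+1))\in\{\pm1\}^2$. In each pattern the equation is a genuine polynomial relation in powers of $3$, which can be cleared of denominators, simplified by Frobenius, and reduced to a condition of the shape $\gcd(\cdot,3^m-1)\in\{1,2\}$ in terms of $\gcd(h,m)$, $\gcd(h+1,m)$, and the like, exactly as for the clean exponents already handled in the paper; collecting these produces candidate numerical conditions on $m,h$. The hard part—which I expect to be the real obstacle—is that a root of any of these clean equations yields an actual weight-$3$ codeword (i.e.\ violates C2 or C3) only if it \emph{also} satisfies the prescribed character constraints on \emph{both} $x$ and $x+1$, coupling the additive pair $(x,x+1)$ to the multiplicative character $\eta$ in a way no single gcd identity can capture. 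The case $h=m-1$ already displays this: there $e=\frac{3^m-1}{2}+1$, so $x^e=\eta(x)x$, and C3 reads $\eta(x+1)(x+1)-\eta(x)x-1=0$, which is \emph{identically} satisfied whenever $\eta(x)=\eta(x+1)=1$; since consecutive nonzero squares exist, C3 fails and $d=3$. The true answer must therefore encode which $h$ avoid such character coincidences, and I would expect a complete solution to need either character-sum estimates bounding the number of $x$ with prescribed $\eta(x),\eta(x+1)$ lying on the clean curves, or an algebraic identity special to this $e$. Short of that, the realistic deliverable is the forced condition $m$ odd, the accompanying gcd conditions, and the character-sum reduction, leaving the precise answer as the experimentally observed one confirmed by Magma for small~$m$.
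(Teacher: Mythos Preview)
The statement you are addressing is one of the paper's nine \emph{open problems}: the paper gives no proof, no proposed answer, and not even Magma evidence for this particular $e$. There is therefore nothing in the paper to compare your approach against; the authors simply pose the question and leave it.

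Taken on its own terms, your proposal is a sensible reconnaissance. The parity computation for C1 is correct (so $m$ must be odd), the observation $e\notin C_1$ by parity is fine, and the rewriting
\[
e \equiv \frac{3^m-1}{2} + (3^h+1-3^{m-1}) \pmod{3^m-1},\qquad
3e \equiv \frac{3^m-1}{2} + 3^{h+1}+2 \pmod{3^m-1}
\]
to introduce the quadratic character $\eta$ is both correct and the natural first move for exponents involving $(3^{m-1}-1)/2$. Your worked example $h=m-1$ (where $e=(3^m+1)/2$ and C3 collapses on pairs of consecutive squares) is a genuine obstruction and correctly illustrates why a pure gcd-type criterion cannot suffice.

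That said, be clear that what you have is a reduction and a heuristic, not a proof: you yourself note that the four character-sign subcases couple $\eta(x)$ and $\eta(x+1)$ in a way that you do not resolve, and you fall back on ``character-sum estimates'' or ``an algebraic identity special to this $e$'' without supplying either. So your deliverable is exactly what you say at the end---the forced condition $m$ odd, some gcd constraints, and a reformulation---which leaves the problem in the same state the paper does: open.
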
 

\section{Concluding remarks} 

When $x^e$ is planar or APN over $\gf(3^m)$, the code $\C_{(1,e)}$ has parameters 
$[3^m-1, 3^m-1-2m, 4]$ and is thus optimal. However, as demonstrated in this paper, 
the code may have the same parameters when $x^e$ is neither planar nor APN. It looks 
hard to completely characterize the ternary cyclic codes $\C_{(1,e)}$ with parameters 
$[3^m-1, 3^m-1-2m, 4]$. In this paper, we presented nine open problems on the 
codes $\C_{(1,e)}$. It would be nice if some of these open problems could be settled.  
Finally, we inform the reader that there are other examples of monomials $x^e$ that 
define optimal cyclic codes $\C_{(1,e)}$ according to our Magma experimental data. 

\section*{Acknowledgments}

The authors are grateful to the reviewers for their comments  
that improved the presentation and quality of this paper.

{\small 
\vspace{.2cm} 
\noindent 
{\bf Cunsheng Ding} (M'98--SM'05) was born in 1962 in Shaanxi, China. 
He received the M.Sc.
degree in 1988 from the Northwestern Telecommunications Engineering 
Institute, Xian, China; and the Ph.D. in 1997 from the University of 
Turku, Turku, Finland.  
From 1988 to 1992 he was a Lecturer of Mathematics at Xidian University, 
China. Before joining the Hong Kong University of Science and Technology 
in 2000, where he is currently Professor of Computer Science and Engineering, 
he was Assistant Professor of Computer Science at the National University 
of Singapore. 

His research fields are cryptography and coding theory. 
He has coauthored 
four research monographs, and served as a guest editor or editor for 
ten journals.  
Dr. Ding co-received the State Natural Science Award of China in 1989.

\vspace{.2cm} 
\noindent 
{\bf Tor Helleseth} (M'89--SM'96--F'97) received the Cand. Real. and Dr. Philos.
degrees in mathematics from the University of Bergen, Bergen, Norway, in 1971
and 1979, respectively.

From 1973--1980, he was a Research Assistant at the Department of Mathematics,
University of Bergen. From 1981--1984, he was at the Chief Headquarters
of Defense in Norway. Since 1984, he has been a Professor in the
Department of Informatics, University of Bergen. During the academic years
1977--1978 and 1992--1993, he was on sabbatical leave at the University of
Southern California, Los Angeles, and during 1979--1980, he was a Research
Fellow at the Eindhoven University of Technology, Eindhoven, The Netherlands.
His research interests include coding theory and cryptology. 

Prof. Helleseth served as an Associate Editor for Coding Theory for the IEEE
TRANSATIONS ON INFORMATION THEORY from 1991 to 1993. He was Program
Chairman for Eurocrypt 1993 and for the Information Theory Workshop in
1997 in Longyearbyen, Norway. He was a Program Co-Chairman for SETA04
in Seoul, Korea, and SETA06 in Beijing, China. He was also a Program
Co-Chairman for the IEEE Information Theory Workshop in Solstrand,
Norway in 2007. During 2007--2009 he served on the Board of Governors for
the IEEE Information Theory Society. In 1997 he was elected an IEEE Fellow
for his contributions to coding theory and cryptography. In 2004 he was elected
a member of Det Norske Videnskaps-Akademi. 
}

\begin{thebibliography}{99}

\bibitem{BD} T. Beth, C. Ding, ``On almost perfect nonlinear permutations," 
in: {\em Advances in Cryptology--EUROCRYPT ’93,} Lecture Notes in Comput. Sci. 765, 
Springer-Verlag, New York, 1993, pp. 65–76. 

\bibitem{BC} L. Budaghyan, C. Carlet, ``Classes of quadratic APN trinomials and hexanomials 
and related structures," {\em IEEE Trans. Inform. Theory,} vol. 54, no. 5, pp. 2354--2357, 2008.

\bibitem{BCL2} L. Budaghyan, C. Carlet, G. Leander. ``Two classes of quadratic APN binomials 
inequivalent to power functions," {\em IEEE Trans. Inform. Theory,} vol. 54, no. 9, pp. 4218--4229, 2008.

\bibitem{BCL1} L. Budaghyan, C. Carlet, G. Leander, ``Constructing new APN functions from 
known ones," {\em Finite Fields and Their Applications,} vol. 15, no. 2, pp. 150--159, April 2009. 

\bibitem{BCP} L. Budaghyan, C. Carlet, A. Pott, ``New classes of almost bent and almost perfect 
nonlinear functions," {\em IEEE Trans. Inform. Theory,} vol. 52, no. 3, pp. 1141--1152, March 2006.

\bibitem{CCD} A. Canteaut, P. Charpin, H. Dobbertin, ``Weight divisibility of cyclic codes, 
highly nonlinear functions on $F_{2^m}$, and crosscorrelation of maximum-length sequences," 
{\em SIAM. J. Discrete Math.}, vol. 13, no. 1, pp. 105--138, 2000. 

\bibitem{CCZ} C. Carlet, P. Charpin, V. Zinoviev, ``Codes, bent functions and permutations 
suitable for DES-like cryptosystems," {\em Des. Codes Cryptogr.,} vol. 15, pp. 125--156, 1998.

\bibitem{CDY05} C. Carlet, C. Ding and J. Yuan, ``Linear codes from highly 
nonlinear functions and their secret sharing schemes,'' {\em IEEE Trans. 
Inform. Theory,} vol. 51, no. 6, pp. 2089–-2102, 2005. 

\bibitem{Char} P. Charpin, “Open problems on cyclic codes,” in: Handbook of Coding Theory, 
Part 1: Algebraic Coding, V. S. Pless, W. C. Huffman, and R. A. Brualdi, Eds.	Amsterdam, 
The Netherlands: Elsevier, 1998, ch. 11.

\bibitem{Chie} R. T. Chien, ``Cyclic decoding procedure for the 
Bose-Chaudhuri-Hocquenghem codes,'' {\em IEEE Trans. Inform. Theory,}  
vol. 10,  pp. 357--363, 1964.  

\bibitem{CM} R. S. Coulter, R. W. Matthews, ``Planar functions and planes of Lenz–Barlotti class II," 
{\em Des. Codes Cryptogr.}, vol. 10, pp. 167--184, 1997.

\bibitem{DO} P. Dembowski, T. G. Ostrom, ``Planes of order $n$ with collineation groups of 
order $n^2$," {\em Math. Z.},  vol. 193,  pp. 239--258, 1968.

\bibitem{DY06} C. Ding, J. Yuan, ``A family of skew Hadamard difference sets," 
{\em J. of Combinatorial Theory,} Series A, vol. 113,  pp. 1526--1535, 2006.  

\bibitem{HRS} T. Helleseth, C. Rong, D. Sandberg, ``New families of almost perfect
nonlinear power mappings," {\em IEEE Trans. Inform. Theory,} vol. 45, no. 2, pp. 475--485, 1999. 

\bibitem{Dobb99} H. Dobbertin, ``Almost perfect nonlinear power functions on GF($2^n$): 
the Welch case," {\em IEEE Trans. Inform. Theory,} vol. 45,  pp. 1271--1275, 1999.

\bibitem{Dobb992} H. Dobbertin, ``Almost perfect nonlinear power functions on GF($2^n$): 
The Niho case," {\em Inform. and Comput.,} vol. 151, pp. 57--72, 1999.

\bibitem{Forn} G. D. Forney, ``On decoding BCH codes,'' {\em IEEE Trans. 
Inform. Theory}, vol. 11,  no. 4, pp. 549--557, 1995. 

\bibitem{Gold} R. Gold, ``Maximal recursive sequences with 3-valued recursive crosscorrelation 
functions," {\em IEEE Trans. Inform. Theory,} vol. 14, pp. 154--156, 1968.

\bibitem{Kasa} T. Kasami, ``The weight enumerators for several classes of subcodes of 
the second order binary Reed-Muller codes," {\em Inform. and Control}, vol. 18, pp. 369--394, 1971.

\bibitem{HX} H. D. L. Hollmanna, Q. Xiang, ``A proof of the Welch and Niho conjectures on 
cross-correlations of binary m-sequences," {\em Finite Fields and Their Applications,} 
vol. 7, no. 2, pp. 253--286, April 2001. 

\bibitem{HPbook} W. C. Huffman and V. Pless, {\em Fundamentals of Error-Correcting Codes,}  
Cambridge University Press, Cambridge, 2003.  

\bibitem{LN97}  L. Lidl, and H. Niederreiter, {\em Finite Fields,}   
         Cambridge University Press, Cambridge, 1997. 
         
\bibitem{LintWils} J. H. van Lint and R. M. Wilson, ``On the minimum distance of cyclic codes," 
vol. 32, pp. 23--40, 1986.          
                  
\bibitem{Nybe} K. Nyberg, ``Differentially uniform mappings for cryptography," 
in: {\em Advances in Cryptology--EUROCRYPT ’93,} Lecture Notes in Comput. Sci. 765, 
Springer-Verlag, New York, 1993, pp. 55--64.

\bibitem{Pran} E. Prange, ``Some cyclic error-correcting codes with simple decoding
algorithms,'' Air Force Cambridge Research Center-TN-58-156, Cambridge, Mass., April 1958.  

\bibitem{YCD06} J. Yuan, C. Carlet and C. Ding, ``The weight distribution of a 
      class of linear codes from perfect nonlinear functions,''  
      {\em IEEE Trans. Information Theory}, vol. 52, no. 2, pp. 712--717, Feb. 2006.    

\bibitem{ZKW}  Z. Zha, G. M. Kyureghyan, X. Wang, ``Perfect nonlinear binomials and their semifields'', 
{\em  Finite Fields and Their Applications,} vol 15, pp. 125--133, 2009.  


\bibitem{Zha} Z. Zha, X. Wang, ``Almost perfect nonlinear power functions in odd characteristic", 
{\em IEEE Trans. Inform. Theory,} vol. 57, no. 7, pp. 4826--4832, 2011. 

\end{thebibliography}
\end{document}